\renewcommand\footnotetextcopyrightpermission[1]{} 
\newcommand{\xhdr}[1]{\vspace{5pt}\noindent\textbf{#1 }}
\newcommand{\ignore}[1]{}
\newcommand{\argmax}{\operatornamewithlimits{argmax}}
\newcommand{\var}{\text{var}}
\newcommand{\squishlist}{\begin{list}{$\bullet$}{\topsep=1pt \parsep=0pt \itemsep=1pt \leftmargin=1em }} 
\newcommand{\squishend}{\end{list}}
\newcommand{\beitemize}{\begin{list}{$\bullet$}{}} 
\newcommand{\enitemize}{\end{list}}
\theoremstyle{plain}
\newtheorem{theorem}{Theorem}
\newtheorem{lemma}{Lemma}
\newtheorem{corollary}{Corollary}
\newtheorem{assumption}{Assumption}
\theoremstyle{definition}
\theoremstyle{remark}
\newcommand{\Z}{\mathbf{Z}} 
\newcommand{\A}{\mathcal{A}} 
\title[GIFF]{A General Incentives-Based Framework for Fairness in Multi-agent Resource Allocation}
\author{Ashwin Kumar}
\affiliation{
  \institution{Washington University in St Louis}
  \city{St Louis, MO}
  \country{USA}}
\email{ashwinkumar@wustl.edu}
\author{William Yeoh}
\affiliation{
  \institution{Washington University in St Louis}
  \city{St Louis, MO}
  \country{USA}}
\email{wyeoh@wustl.edu}
\begin{abstract}
We introduce the General Incentives-based Framework for Fairness (GIFF), a novel approach for fair multi-agent resource allocation that infers fair decision-making from standard value functions. In resource-constrained settings, agents optimizing for efficiency often create inequitable outcomes. Our approach leverages the action-value (Q-)function to balance efficiency and fairness without requiring additional training. Specifically, our method computes a local fairness gain for each action and introduces a counterfactual advantage correction term to discourage over-allocation to already well-off agents. This approach is formalized within a centralized control setting, where an arbitrator uses the GIFF-modified Q-values to solve an allocation problem.

Empirical evaluations across diverse domains—including dynamic ridesharing, homelessness prevention, and a complex job allocation task—demonstrate that our framework consistently outperforms strong baselines and can discover far-sighted, equitable policies. The framework's effectiveness is supported by a theoretical foundation; we prove its fairness surrogate is a principled lower bound on the true fairness improvement and that its trade-off parameter offers monotonic tuning. Our findings establish GIFF as a robust and principled framework for leveraging standard reinforcement learning components to achieve more equitable outcomes in complex multi-agent systems.

\end{abstract}
\keywords{Fairness, Resource Allocation, Multiagent Systems, Multiagent Planning}
\newcommand{\BibTeX}{\rm B\kern-.05em{\sc i\kern-.025em b}\kern-.08em\TeX}
\begin{document}


\pagestyle{fancy}
\fancyhead{}


\maketitle 
\sloppy
\allowdisplaybreaks


%

\section{Introduction}

In many real-world applications, ranging from ridesharing~\citep{ride_alonso, shah2020neural} to allocating homelessness resources~\citep{kube2019allocating}, multiple agents vie for limited resources, often leading to significant disparities in outcomes. Traditional RL-based allocation methods focus on maximizing individual or aggregate utility, but they typically overlook fairness, resulting in inequitable resource distributions that can undermine both system-wide performance and societal acceptance. In this work, we address this gap by proposing a novel framework for fair multi-agent resource allocation that leverages the existing action-value functions to \emph{infer} fairness improvements, without any additional learning.

Existing approaches to fair RL generally incorporate fairness directly into the reward structure or modify the learning process to produce value functions that reflect fairness considerations. However, these strategies can introduce complexities such as non-stationarity and may require extensive retraining, which is impractical in many dynamic and high-stakes environments. 
Moreover, such methods typically assume that agents are intrinsically motivated to be fair, that is, they are willing to sacrifice their own utility for collective equity. In many real-world scenarios, agents may have competing objectives or operate independently without centralized coordination, making it unreasonable to expect them to adopt fairness as an inherent goal.
Our work takes a different path: Rather than embedding fairness into the value function, we infer fair decisions by analyzing the long-term Q-values computed by agents. This allows us to extract information about the potential fairness gains associated with different actions and adjust the allocation process accordingly.

The core idea of our approach, named General Incentives-based Framework for Fairness (GIFF), is to combine the standard utility-driven Q-values with additional fairness-related signals. Specifically, we decompose the fairness gain into a component that captures the marginal improvement in fairness resulting from an action, and we introduce an advantage correction term that incentivizes better-off agents to moderate their resource consumption in favor of those who are disadvantaged. This decomposition not only translates to a diverse set of fairness metrics but also offers a transparent mechanism to balance efficiency and fairness during the resource allocation process.

We formalize our framework within a centralized control setting, where a central arbitrator uses the GIFF-modified Q-values to solve an optimization problem subject to resource constraints, and wishes to incorporate fairness into the decision-making.
Through extensive experiments in domains such as ridesharing, homelessness, and job allocation, we demonstrate that our framework achieves competitive fairness-utility trade-offs compared to state-of-the-art methods. Our evaluations also highlight the critical role of the advantage correction term, particularly in environments where independent evaluations of fairness improvement fail to capture the benefits of inter-agent cooperation.


In summary, our contributions are as follows:
\begin{enumerate}
    \item We develop a \textbf{General Incentives-based Framework for Fairness (GIFF)} that can be used to improve diverse fairness metrics in multi-agent resource allocation problems without the need for additional learning. This framework infers fairness from standard Q-values without altering the underlying RL models.
    \item We propose a principled mechanism—comprising fairness gain estimation and counterfactual advantage correction—to adjust allocation decisions in favor of fairer outcomes, with only two hyperparameters.
    \item We derive instantiations of GIFF for various fairness measures like variance, $\alpha$-fairness and Generalized Gini Functions (GGF).
    \item We provide theoretical bounds on the actual fairness improvement in relation to the locally estimated fairness gain when using GIFF, for multiple fairness functions. We also provide a Pareto-improvement property over the total local fairness.
    \item We provide experimental evidence across multiple domains and fairness metrics to illustrate the effectiveness of our approach. 
\end{enumerate}

By rethinking how fairness can be integrated into multi-agent systems, our work opens up promising new directions for achieving equitable resource distribution in complex, dynamic environments---a crucial step toward deploying RL in socially sensitive applications.

\section{Related Work}
Many domains employ multi-agent resource allocation with centralized decision-makers, like ridesharing~\citep{ride_alonso, shah2020neural}, homelessness prevention~\citep{kube2019allocating, kube2023community}, satellite allocation~\citep{nauman2024satellite} and wireless networks~\citep{xue2003networks}. In our work, we focus on fair decision making in multi-agent systems that have a similar structure. There has been recent work on developing methods to learn fair policies in multi-agent RL~\citep{jiang2019FEN, siddique2020MOMDP, zimmer2021MOMDP}, using policy optimization~\citep{schulman2017PPO} with modified rewards or hierarchical policies to elicit fair behavior. \citet{mcilrathNonMarkovianFairness} look at fair resource allocation at different time horizons, learning to be fair by constructing counterfactual experiences. In our approach, we instead try to develop a learning-free method to improve fairness, by post-processing the Q-values associated with resource allocations. 
Simple Incentives~\citep{SI_kumar2023} is a related approach looking at fairness in ridesharing systems, employing domain and metric specific fairness post-processing. Our approach is much more general and adaptable to different domains and fairness metrics, as we show in our experimental results.

Beyond reinforcement learning and resource allocation, the machine learning community has broadly investigated bias and fairness. Surveys such as \citep{mehrabi2021MLBiassurvey} provide comprehensive overviews of fairness metrics and mitigation strategies, while foundational works \citep{DP_dwork2012, hardt2016equality} have laid the groundwork for understanding fairness in supervised settings. In parallel, economic and social welfare research has long provided robust formulations of equity, giving rise to measures like $\alpha$-fairness and the Generalized Gini Function—which are rooted in social welfare theory \citep{rawls1971atheory, nash1950bargaining, sen2017collective, caragiannis2019unreasonable}. By drawing on these diverse strands of literature, our work bridges multi-agent reinforcement learning, dynamic resource allocation, and fairness, contributing a novel framework that infers fairness directly from long-term Q-value estimates.

\section{Preliminaries}

We now formally describe the resource allocation problem that tackled in this paper as well as describe  several key fairness concepts that motivates our work and used later in empirical evaluations.

\subsection{Resource Allocation}
A typical resource allocation problem consists of $n$ agents $i\in\alpha$ with diverse preferences over $K$ types of resources, with a total set of resources $\mathcal{R}$ being available. Each time-step, agents attempt to take resources they want, following which new resources appear and the global state changes according to some transition dynamics.

In the centralized control setting, an arbitrator aggregates agent preferences and allocates actions while ensuring no two agents try to take the same resource. 
This constrained Multi-agent MDP model~\citep{CMMDP_de2021} is described by the tuple $\mathcal{M}$ with the following components:
\begin{align}
\mathcal{M} = \langle \alpha, S, \mathcal{O}, \{A_i\}_{i \in \alpha}, T, R, \gamma, c \rangle 
\end{align}
\squishlist
    \item $\alpha$ is the set of agents indexed by $i$ ($n$ agents).
    \item $S$ is the global state space.
    \item $\mathcal{O}: S \rightarrow O_1 \times O_2 \times \ldots \times O_n$ is the observation function that maps the true state to agent observations.
    \item $A_i$ is the action space for agent $i$, where an action $a$ includes allocation of a set of resources.
    \item $T: S \times A_1 \times A_2 \times \ldots \times A_n \times S \rightarrow [0,1]$ represents the joint transition probabilities.
    \item $R: S \times A_1 \times A_2 \times \ldots \times A_n  \rightarrow \mathbb{R}^n$ denotes the (utility) reward function, which returns a vector of rewards, one for each agent.
    \item $\gamma$ is the discount factor for future rewards.
    \item $c: A_1 \cup A_2 \cup \ldots \cup A_n \rightarrow \mathbb{R}^K$ maps each action to its resource consumption for $K$ types of resources.
\squishend

Each agent can then learn an action-value function $Q(o_i, a)$, which estimates the expected long-term return from taking action $a$ given the local observation $o_i$, without knowledge of other agents’ current actions. Formally, this is defined as:
\begin{align}
    Q(o_i, a) = \mathbb{E} \left[ \sum_{t=0}^{\infty} \gamma^t r_i^{(t)} \,\middle|\, o_i^{(0)} = o_i, a_i^{(0)} = a \right],
\end{align}
where $r_i^{(t)}$ is the reward received by agent $i$ at time-step $t$, and $\gamma \in [0,1)$ is the discount factor. The expectation is taken over the trajectories induced by the environment dynamics and policies of all agents, conditioned only on agent $i$’s observation and action.
In any state $s$, agents compute the Q-values for all of their available actions and communicate them to the central arbitrator, who can then use them to compute a utilitarian allocation.

Let $\A$ denote the allocation of actions decided by the central allocator such that $\A_i$ is the action assigned to agent $i$.
Let $x_i(a) \in \{0,1\}$ be a binary decision variable that indicates whether agent $i$ is assigned action $a \in A_i$. Let $\mathcal{R} \in \mathbb{R}^K$ denote the vector of available resources, with $\mathcal{R}_k$ representing the quantity of resource type $k \in \{1,2,\dots,K\}$. Let $c(a) \in \mathbb{R}^K$ denote the resource consumption vector for action $a$, where $c(a)_k$ is the amount of resource $k$ consumed by action $a$. This gives us the following optimization:
\begin{align}
    \max_{x_i(a) \in \{0,1\}} \quad & \sum_{i \in \alpha} \sum_{a \in A_i} x_i(a) \cdot Q(o_i,a) \label{eq:Opt_ILP} \\
    \text{subject to} \quad 
    & \sum_{a \in A_i} x_i(a) = 1, \quad \forall i \in \alpha \label{eq:action_constraint} \\
    & \sum_{i \in \alpha} \sum_{a \in A_i} x_i(a) \cdot c(a)_k \le \mathcal{R}_k, \quad \forall k \in \{1, \ldots, K\} \label{eq:resource_constraint}
\end{align}
These constraints ensure that each agent is assigned exactly one action and that total resource usage does not exceed available supplies.
This is a general formulation, and many real-world problems follow this approach~\citep{shah2020neural, kube2019allocating, ride_alonso, nauman2024satellite, xue2003networks}. 
This problem can be formulated as an integer linear program (ILP), but more efficient algorithms and distributed approaches exist for allocation problems with stricter constraints. For instance, if the constraints boil down to a bipartite matching between agents and resources, the Hungarian algorithm~\citep{kuhn1955hungarian} can solve the allocation problem in polynomial time.

Alternatively, some systems assume agents act independently without a central arbitrator or consensus-based decision making. In this case, methods like first-come-first-served or random tie breaks are used to decide who gets contested resources. This approach is much more commonly seen in multi-agent RL~\citep{zimmer2021MOMDP, jiang2019FEN, sunehag2017VDN}. 
When using a policy optimization based approach, agents express preferences over actions which maximize their chances of getting good resources in the form of a policy $\pi$ rather than expressing valuations over bundles of resources. 

In this paper, we restrict ourselves to using Q-functions. 
Further, we select the centralized control setting, where the decision-maker has the ability to enforce fairness constraints by providing incentives to agents. Thus, in this work, we assume agents bid for actions by communicating their Q-values to the central decision maker, and each action is associated with the allocation of some resources and the corresponding gain in utility is the reward. The Q-function then captures the long-term expected utility for each agent.


    \xhdr{Payoff-vector ($\Z$):} 
    We are interested in resource allocation problems that have a temporal aspect to them, i.e., after each allocation, new resources may arrive in the system, and resources and agents may enter or exit the allocation pool. We consider two cases: 1) A fixed number of agents, and 2) An arbitrary number of agents that belong to a fixed number of groups. In both cases, we consider a total of $n$ groups or agents. Given this, we can construct a vector of payoffs $\textbf{Z} = [z_1, z_2, \dots, z_n]$ that captures the accumulated value of all resources allocated to each agent/group over time.
    
    Unless specified otherwise, we use $z_i$ to denote the cumulative payoff for agent or group $i$ at the current time-step. When referring to a specific time, we will write $z_i^{(t)}$ to indicate the value at time-step $t$. The cumulative reward is given by:
    \begin{align}
        z_i = \sum_{\tau=0}^{t} r_i^{(\tau)},
    \end{align}
    where $r_i^{(\tau)}$ is the reward received by agent $i$ at time-step $\tau$. In some settings, an average payoff is used instead:
    \begin{align}
        \bar{z}_i = \frac{1}{t+1} \sum_{\tau=0}^{t} r_i^{(\tau)}.
    \end{align}
    
    This payoff vector serves as a temporal record of how resources have been distributed across agents and provides a foundation for incorporating fairness criteria into future allocation decisions.
    
\subsection{Fairness Concepts}

To capture fairness, we consider a fairness function 
$F: \mathbb{R}^n \rightarrow \mathbb{R},$
which maps any payoff vector $\mathbf{Z} = [z_1, \dots, z_n]$ to a numerical value such that larger values correspond to fairer distributions. In other words, for any two payoff vectors $\mathbf{Z}_1$ and $\mathbf{Z}_2$, we say that $\mathbf{Z}_1$ is fairer than $\mathbf{Z}_2$ if and only if 
$F(\mathbf{Z}_1) > F(\mathbf{Z}_2)$.

In the literature on fair multi-agent resource allocation, two main schools of thought emerge: 

\squishlist
\item \textbf{Social Welfare Function Approaches.} In these methods, fairness is directly embedded into a social welfare function that aggregates individual utilities~\citep{siddique2020MOMDP, zimmer2021MOMDP}. Such functions may be designed to satisfy three desirable properties: They exhibit \textbf{impartiality}, remaining invariant under any permutation of agents to ensure equal treatment; they promote \textbf{equity} by rewarding reallocations that transfer resources from better-off agents to those worse-off, consistent with the Pigou-Dalton principle; and they ensure \textbf{efficiency} by assigning a higher value to allocations in which every agent receives higher or equal utility compared to alternative allocations.

These metrics usually take the form of a summation over transformations of agent utilities, for example:
\begin{itemize}
\item \textbf{$\alpha$-Fairness:}
For a given payoff vector $\mathbf{Z} = [z_1, \dots, z_n]$, define the per-agent $\alpha$-fair utility as:
\begin{align}
U_\alpha(z) &= \begin{cases}\frac{z^{\,1-\alpha}}{1-\alpha} & \text{if } \alpha \neq 1,\\ \log(z) & \text{if } \alpha = 1,\end{cases} \label{eq:ALF} 
\end{align}
and the overall fairness measure as:
$F_{\alpha}(\mathbf{Z}) = \sum_{i=1}^n U_\alpha(z_i)$.

\item \textbf{Generalized Gini Function (GGF):}
Order the components of $\mathbf{Z}$ as $z_{(1)} \leq z_{(2)} \leq \cdots \leq z_{(n)}$. Then, the GGF function is defined as
\begin{align}
    F_{GGF}(\mathbf{Z}) = \sum_{i=1}^n w_i\, z_{(i)}, \label{eq:GGF} 
\end{align}
with weights satisfying $w_1 \ge w_2 \ge \cdots \ge w_n$ and $\sum_{i=1}^n w_i = 1$.
\end{itemize}
By varying $\alpha$ or $w_i$, these metrics can transition between utilitarian and egalitarian fairness. Specifically, $\alpha$-fairness with $\alpha=1$ is equivalent to the popular log Nash Welfare metric.


\item \textbf{Distributional Approaches.}  
Alternatively, fairness may be measured separately via distributional metrics—such as variance, the Gini index, or Jain's fairness index—and then combined with total utility~\citep{SI_kumar2023, raman2020ridefair}. These metrics can capture non-linear relationships among agents' utilities and provide a distinct measure of fairness that is later interpolated with the overall efficiency.
\squishend

Having outlined these approaches, our work aims to balance efficiency and fairness via a joint objective. Specifically, for a given time horizon $T$ with payoff vector $\mathbf{Z}_T$, we seek an allocation policy that maximizes:
\begin{align}
\max\quad (1-\beta) &U_T + \beta F(\mathbf{Z}_T), \label{eq:Objective}\\
U_T &= \sum_{i=1}^{n} z_i,
\end{align}
Here, $\beta \in [0,1]$ is a trade-off parameter that controls the relative importance of efficiency (total utility) versus fairness.

In our evaluations, we show results using both kinds of fairness metrics. In particular, for the social welfare function approaches, we incorporate a specialized term that enables agents to locally assess the benefits of their actions for others. Notably, our incentives-based framework operates entirely online with zero additional training.

\section{GIFF}
The key insight in our work is that the Q-function, which captures the long-term utilitarian effects of actions, can also be used to guide the decision maker towards a fair allocation. Previous work has considered directly learning to optimize for fairness~\citep{zimmer2021MOMDP, jiang2019FEN} or using knowledge of the true reward function to myopically improve variance using domain-specific post-processing~\citep{SI_kumar2023}. Instead, we provide a General Incentives-based Framework for Fairness (GIFF) that improves fairness without the need for additional learning for a variety of domains and fairness functions.

To achieve this, we develop an approach that takes advantage of the instantaneous pre-decision payoffs $\textbf{Z}_t$ at the current time $t$ to guide the present decision towards a fairer outcome by improving the perceived value of fairer allocations. We do this by computing the estimated improvement to fairness, the \emph{fairness gain} of actions, and augmenting the pre-trained Q-values to reflect our objective (Eq.~\ref{eq:Objective}).


First, we assume an idealized scenario.
Let $\A=[a_i]_{i\in\alpha}$ denote an allocation that contains one action for each agent. We overload the notation for the reward function to let $R(\A_i)$ be a shorthand for the true reward received by agent $i$ under allocation $\A$. The updated payoff vector $\Z_{t+1}|\A$ can be computed using $\Z_t$ and $\A$, by updating the payoffs using the true rewards. Then, the fairness gain for any allocation can be defined as:
\begin{align}
    \Delta F(\A) &= F(\textbf{Z}_{t+1}) - F(\textbf{Z}_t)\\
    z_i^{t+1} &= z_i^{t} + R(\A_i) & \forall i \\
    \A^*_f &= \argmax_\A  \Delta F(\A)
\end{align}

Here, $\A^*_f$ is the allocation that improves fairness the most in the current step. We can also conceive of a similar search which maximizes over an entire sequence of allocations to maximize long-term fairness. However, 
even for the one-step allocation, the search space is combinatorial in the number of agents and their respective action spaces, as we have to consider all possible joint actions. This makes the global optimization intractable, necessitating alternate methods for computing a fair allocation.

\subsection{Using Q-values to Estimate Fairness}
We observe that it is much easier to reason about fair actions if we can decompose the fairness gain across agents. To achieve this, we reason only over the locally conditioned updated payoff vector $\Z^{a^i}_t$, updating all accumulated utilities based only on a single agent's action $a^i$, keeping everything else unaffected.
\begin{align}
    z_j^{t+1} &= z_j^t + \mathbb{I}\{j=i\}\, R(a^i) & \forall j
\end{align}

In many real problems, having access to the true reward function $R(a)$ is unlikely. Further, in dynamic environments, agents may take a critical action earlier, which leads to a payoff after multiple steps; however, a critical decision towards getting to it may happen much earlier. 
%
If the agent is a Q-learner, we can leverage the fact that the Q-values encode the long-term value of taking certain actions, and the difference in Q-values will be small in states where all routes lead to similar payoffs. Thus, if we use the Q-value as a proxy for the reward function, we can account for long-term returns without knowing the reward function or the environment dynamics.
\begin{align}
    \Delta F(a^i) &= F(\Z^{a^i}_{t+1}) - F(\Z_t) \label{eq:fair_reward}\\
    z_j^{t+1} &= z_j^t + \mathbb{I}\{j=i\}\, Q(o_i, a^i) & \forall j \label{eq:z_local_update}
\end{align}

This quantity $\Delta F(a^i)$, termed the \emph{fairness gain} measures the marginal (local) impact of agent $i$'s action on the overall fairness of the allocation given the current distribution of resources $\Z_t$.
Note that computing this for all agent actions is linear in the size of the action space for each agent.
This has two benefits. First, we do not need to have access to the true reward function (which is not available in many cases) and, second, we capture more than just the immediate return, allowing us to capture long-term effects of certain allocations. However, this has the drawback that it does not capture inter-agent interactions very well. Thus, we introduce an additional mechanism that can provide this information.

\subsection{Advantage Correction: Incorporating Counterfactual Fairness Gains}

Fairness concerns in dynamic resource allocation may also require that agents who are already well-off (i.e., have a high accumulated utility) are discouraged from taking resources that can help worse-off agents improve their return, thus allowing disadvantaged agents to catch up. This is also a desirable property of social welfare functions, termed \textbf{equity}, used to capture the notion of accumulated wealth: Moving resources from better-off agents to worse-off agents should improve the fairness function's value. This is also known as the Pigou-Dalton principle~\citep{zimmer2021MOMDP} and has been discussed in previous works in fair multi-agent RL.

However, in practice, the local fairness gain $\Delta F(a^i)$---which reflects only the immediate change in an agent's own utility---does not capture the broader altruistic impact of reallocating resources. In many fairness metrics, this counterfactual update considers solely the acting agent's payoff, thereby ignoring the significant improvement in fairness that would occur if the resource were allocated to a disadvantaged agent. Consequently, even when an agent’s decision to forgo an action yields $\Delta F=0$, it may still lead to substantial overall fairness gains if the resource were reallocated. This limitation motivates the inclusion of a method to more accurately account for these counterfactual benefits.

To this end, we introduce an \emph{advantage correction} term that incentivizes better-off agents to give up their top preferences to benefit other disadvantaged agents.
Recall that in our setting, when agent $i$ takes action $a$, its local fairness gain $\Delta F(a)$ is computed by updating the agent's payoff $z_i$ with $Q(o_i, a)$ (Eqs.~\ref{eq:fair_reward} and~\ref{eq:z_local_update}).
To measure the counterfactual benefit of this action, we consider allocating this resource to any other agent that can take this action $j\neq i, a\in A_j$, and compute the counterfactual benefit $\Delta F^{(j)}$, using $Q(o_j, a)$ to update agent $j$'s payoff:
\begin{align}
\Delta F^{(j)} &= F\bigl(\mathbf{Z}^{(j)}_{t+1}\bigr) - F(\mathbf{Z}_t),\\
z_j^{t+1} &= z_j^t + Q(o_j, a)
\end{align}
Let us define the set of these candidate counterfactual agents as $\alpha_c(a) = \{j\in\alpha:j\neq i, a\in A_j\}$.
We can then compute the average counterfactual fairness improvement:
\begin{align}
\Delta F_{\text{avg}}(a) = \frac{1}{|\alpha_c(a)|}\sum_{j\in\alpha_c(a)} \Delta F^{(j)}
\end{align}

To capture the benefit of allocating this resource to agent $i$, we can calculate the advantage function:
\begin{align}
F_{\text{adv}}(a) = \Delta F(a) - \Delta F_{\text{avg}}(a)
\end{align}
A negative $F_{\text{adv}}$ suggests that another agent would benefit more from the resource allocation, whereas a positive $F_{\text{adv}}$ indicates that the current agent can better improve fairness. If the fairness metric follows the principle of equity, then we expect agents with higher fairness gain to be the disadvantaged agents.
Instead of $\Delta F_{\text{avg}}$, alternate baselines like the maximum fairness improvement may also be considered.

To integrate this counterfactual fairness measure with the action's inherent quality, we weigh the fairness advantage by the relative Q-value gap:
\begin{align}
\Delta Q(a) = Q(o_i,a) - \min_{a'\in A_i} Q(o_i,a'),
\end{align}
which reflects how much better action $a$ is compared to the worst option for agent $i$. This is helpful in preventing disproportional changes because of Q-value overestimation.

Finally, we define the counterfactual advantage correction term as:
\begin{align}
\Delta Q_{\text{adv}}(a) = F_{\text{adv}}(a)\, \Delta Q(a).
\label{eq:adv_correction}
\end{align}
This formulation has the following intuitive implications:
\squishlist
    \item If the fairness gain $\Delta F(a)$ is lower than the mean counterfactual gain $\Delta F_{\text{avg}}(a)$, then $F_{\text{adv}}(a)$ is negative, leading to a negative $\Delta Q_{\text{adv}}(a)$. This reduces the attractiveness of action $a$, discouraging further accumulation by already advantaged agents.
    \item Conversely, if $\Delta F(a)$ exceeds $\Delta F_{\text{avg}}(a)$, then $F_{\text{adv}}(a)$ is positive, and $\Delta Q_{\text{adv}}(a)$ is positive. This boosts the value of actions that help a disadvantaged agent catch up.
\squishend

\subsection{GIFF-modified Q-values}
We combine the original Q-value estimate with the fairness gain and the counterfactual advantage correction to obtain the GIFF-modified Q-value, which we can use in the optimization in Equation~\ref{eq:Opt_ILP} to compute allocations:
\begin{align}
    Q_f(a) = \Delta F(a) + \delta\, \Delta Q_{\text{adv}}(a) \nonumber
\end{align}
\begin{align}
    \boxed{
    Q^{\text{GIFF}}(o_i, a, \beta, \delta) = (1-\beta)\, Q(o_i,a) + \beta\, Q_f(a)
    \label{eq:GIFF_modification}
    }
\end{align}
\squishlist
    \item \( \beta \in [0,1] \) controls the trade-off between efficiency (standard Q-values) and fairness, with $\beta=1$ leading to allocations based purely off the fairness gain.
    \item $\delta \ge 0$ controls the degree of advantage correction. Empirically, we observed that small positive values~($<0.5)$ lead to good results, but this is dependent on the environment and fairness function being used.
\squishend
In practice, by adjusting \( \beta \) and \(\delta\), the central optimizer can be nudged toward actions that balance both overall system utility and fairness, as measured by \( F(\Z) \).

\xhdr{Approximation for Distributed Computation:}
Computing the counterfactual fairness gains requires access to the Q-values of all agents. However, in many practical settings, collecting the true Q-values from every agent at each time-step may be infeasible due to communication constraints or scalability concerns.

To reduce this overhead, we adopt an approximation where each agent assumes that all other agents evaluate actions similarly to themselves. That is, agent $i$ approximates the Q-values of other agents $j \in \alpha_c$ as:
\begin{align}
    Q(o_i, a) \simeq Q(o_j, a) \quad \forall j \in \alpha_c,
\end{align}
where $\alpha_c \subseteq \alpha$ is the subset of agents competing for $a$.

This assumption enables each agent to use their own Q-value estimates as stand-ins for those of others when computing fairness-aware updates. As a result, the modified Q-values can be computed using only the current utility vector $\mathbf{Z}_t$ and the agent's local Q-values, avoiding the need for global Q-value communication.




\section{Theoretical Results}
\label{sec:giff_theory_results}

In this section, we provide theoretical guarantees for the core mechanism of GIFF. The algorithm's fairness-aware Q-value, $Q^{\text{GIFF}}$, relies on a tractable surrogate for the true, combinatorial fairness improvement of a joint allocation. Specifically, GIFF approximates the true joint gain by summing the individual, local fairness gains of each agent's action (or Q-value), a quantity we formally define as the surrogate, $S$. We now prove three key properties of this design: (1) This surrogate $S$ is a principled lower bound on the true fairness improvement for several canonical fairness functions; (2) Maximizing this surrogate is guaranteed to improve as the fairness weight $\beta$ increases; and (3) How these guarantees on our surrogate translate to guarantees on the realized, real-world fairness.

Throughout, let $\Z=(z_1,\dots,z_n)\in\mathbb{R}^n$ be the current payoff vector and $y=(y_1,\dots,y_n)\in\mathbb{R}^n_{\ge 0}$ be the vector of utility increments from a feasible allocation. For a fairness function $F:\mathbb{R}^n\to\mathbb{R}$, we define the \emph{realized fairness improvement} as $\Delta_{\mathrm{joint}} := F(\Z+y)-F(\Z)$ and the \emph{local fairness gain} for agent $i$ as $\Delta^{\mathrm{local}}_i := F(\Z+y_i e_i)-F(\Z)$, where $e_i$ is the $i$-th unit vector. GIFF's objective uses the sum of local gains, $S := \sum_{i=1}^n \Delta^{\mathrm{local}}_i$, as a surrogate for the true joint improvement.

Our proofs rely on the following assumptions:
\begin{assumption}[Nonnegative increments]
\label{asmp:nonneg-y}
Any agent's change in utility from an allocation is nonnegative: $y_i \ge 0$ for all $i$.
\end{assumption}
\begin{assumption}[Q-value correctness]
\label{asmp:Q-corr}
Q-values are perfectly accurate predictors of utility increments.
\end{assumption}

\subsection{Local–Gain Lower Bound}
\label{subsec:theory-lg-lb}

Our first result establishes that the surrogate $S$ is a conservative lower bound on the realized fairness improvement $\Delta_{\mathrm{joint}}$ for four canonical fairness metrics.

\begin{theorem}[Local–Gain Lower Bound]
\label{thm:local-gain-lb}
Let $\Z\in\mathbb{R}^n$ be a payoff vector and $y\in\mathbb{R}^n_{\ge 0}$ be a nonnegative increment vector. For each fairness function $F\in \{F_\alpha, F_{GGF}, F_{var}, F_{min}\}$, the realized joint gain dominates the sum of local gains:
$$
\Delta_{\mathrm{joint}} \;\ge\; S = \sum_{i=1}^n \Delta^{\mathrm{local}}_i,
$$
\end{theorem}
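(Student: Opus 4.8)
The plan is to reduce all four cases to a single structural property—supermodularity of the fairness function with respect to the coordinatewise order—and then verify that property for each $F$ separately. The unifying observation is that the claimed inequality is precisely a statement about increasing differences. Fixing any ordering of the coordinates and writing the joint gain as a telescoping sum,
\[
\Delta_{\mathrm{joint}} = \sum_{k=1}^n \Bigl[F\bigl(\Z + \textstyle\sum_{j\le k} y_j e_j\bigr) - F\bigl(\Z + \textstyle\sum_{j<k} y_j e_j\bigr)\Bigr],
\]
each summand is the marginal gain of incrementing coordinate $k$ by $y_k$ at a point where coordinates $1,\dots,k-1$ have already been raised by the nonnegative amounts $y_1,\dots,y_{k-1}$. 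By Assumption~\ref{asmp:nonneg-y} these base coordinates only move upward, so if $F$ has the property that the marginal gain in any coordinate is nondecreasing in every other coordinate (supermodularity / increasing differences), then each telescoped term dominates the corresponding local gain $\Delta^{\mathrm{local}}_k = F(\Z + y_k e_k) - F(\Z)$. Summing over $k$ gives $\Delta_{\mathrm{joint}} \ge S$. I would state this first as a standalone lemma, so that the four metrics become instances to verify.

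The per-metric verification then proceeds as follows. For $F_\alpha(\Z)=\sum_i U_\alpha(z_i)$ the function is separable, so all mixed second derivatives vanish and in fact $\Delta_{\mathrm{joint}}=S$ exactly; the bound holds with equality. For $F_{var}=-\mathrm{Var}(\Z)$ a one-line computation yields constant positive mixed partials $\partial_i\partial_j F = 2/n^2 > 0$ for $i\neq j$, which is supermodularity. For $F_{min}(\Z)=\min_i z_i$ I would invoke the fact that the coordinatewise minimum is a canonical supermodular function: raising any other coordinate can only make the acting coordinate more likely to be the binding minimum, so its marginal gain is nondecreasing in the others, exactly the hypothesis of the lemma.

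The main obstacle is $F_{GGF}$, where the sort scrambles the identity of each coordinate and the function is only piecewise linear, so the smooth cross-partial argument does not apply. My plan is to bypass the sorting through an Abel-summation decomposition into prefix order-statistic sums. Writing $P_k(\Z):=\sum_{i=1}^k z_{(i)}$ for the sum of the $k$ smallest entries and using $w_i=\sum_{k\ge i}(w_k-w_{k+1})$ with $w_{n+1}:=0$, one obtains
\[
F_{GGF}(\Z) = \sum_{k=1}^n (w_k - w_{k+1})\, P_k(\Z),
\]
a conic combination with nonnegative coefficients since $w_1\ge\cdots\ge w_n$. Because supermodularity is preserved under nonnegative combinations, it suffices to show each $P_k$ is supermodular. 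This follows from the identity $P_k(\Z)=\sum_i z_i - \max_{|S|=n-k}\sum_{i\in S} z_i$: the top-$(n-k)$ order-statistic sum is submodular, and a modular function minus a submodular function is supermodular. With all four metrics established as supermodular, the general lemma closes the proof. The one subtlety I would handle carefully is ties in the ordering for $F_{GGF}$ and $F_{min}$, where I rely on the lattice definition of supermodularity ($F(x\vee y)+F(x\wedge y)\ge F(x)+F(y)$) rather than derivatives, so that the argument is unaffected by the measure-zero set where the sort is ambiguous.
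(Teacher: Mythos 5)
Your proof is correct, but it takes a genuinely different route from the paper's. The paper proves the four cases by four bespoke lemmas: exact additivity from separability for $F_\alpha$; an explicit quadratic expansion for negative variance showing $\Delta_{\mathrm{joint}}-S=\frac{2}{n^2}\sum_{i<j}y_iy_j\ge 0$; an Abel-summation into prefix sums $S_k$ of the $k$ smallest entries followed by a direct case analysis on which index sets achieve the minimum for GGF; and a case split on the multiplicity of the minimizer for maximin. You instead factor out a single structural lemma --- supermodularity implies that the telescoped joint gain dominates the sum of local gains when all increments are nonnegative --- and reduce each metric to a supermodularity check. This is more unified and immediately extends the theorem to any supermodular fairness function, which is a real gain in generality; the shared ingredient with the paper is only the conic decomposition $F_{GGF}=\sum_k(w_k-w_{k+1})P_k$. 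What the paper's longer route buys is the exact equality conditions and the explicit slack formulas (e.g.\ the variance cross-term) that are reused later in its two-sided slack bounds; your argument certifies the inequality but not the quantitative gap. The one step you should not leave as an assertion is the submodularity of the top-$(n-k)$ sum $T_{n-k}(\Z)=\max_{|S|=n-k}\sum_{i\in S}z_i$: it is true, and your Topkis-style intuition (the marginal of $T_m$ in coordinate $i$ is the indicator that $z_i$ ranks among the top $m$, which is nonincreasing in every other coordinate) is the right idea, but a pointwise maximum of modular functions is not submodular in general, so this particular instance needs the explicit lattice verification of $T_m(x\vee y)+T_m(x\wedge y)\le T_m(x)+T_m(y)$ (with the tie-handling you already flag) before the GGF case is complete.
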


\begin{proof}[Proof Sketch]
The result for $\alpha$-fairness follows from the separability of the function, which yields exact equality. For negative variance, the inequality arises from the non-negative cross-term $\frac{2}{n^2}\sum_{i<j} y_i y_j$ in the expression for $\Delta_{\mathrm{joint}} - S$. The proofs for GGF and maximin rely on the properties of order statistics and case analysis on the minimum-achieving agents, respectively. Full proofs are provided in the appendix.
\end{proof}

\subsection{Monotonicity of Surrogate Fairness in \texorpdfstring{$\beta$}{beta}}
\label{subsec:theory-monotone}

Next, we show that GIFF's surrogate objective is guaranteed to be nondecreasing as the fairness weight $\beta$ increases. This provides a reliable mechanism for tuning the fairness-utility trade-off.

\begin{theorem}[Monotone Surrogate Fairness]
\label{thm:monotone-surrogate}
Fix a decision round and let $\mathcal A$ be the finite set of feasible joint allocations. Let $U(A)$ be the total utility and $S(A)$ be the sum of local fairness gains for an allocation $A \in \mathcal A$. Let $A^*(\beta)$ be the allocation chosen by GIFF for a given fairness weight $\beta \in [0, 1)$. For any $0 \le \beta_1 < \beta_2 < 1$, the corresponding surrogate fairness values are ordered:
$$
S(A^*(\beta_2)) \;\ge\; S(A^*(\beta_1)).
$$
\end{theorem}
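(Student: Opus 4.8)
The plan is to observe that, once the decision round and the advantage-correction weight $\delta$ are fixed, the GIFF score of a joint allocation $A=[a_i]_{i\in\alpha}$ is exactly the affine combination $(1-\beta)U(A)+\beta S(A)$, where $U(A)=\sum_i Q(o_i,a_i)$ and $S(A)$ is the fairness component (the sum of local fairness gains). The only property of these two quantities that the argument needs is that each is a fixed constant per allocation that does \emph{not} depend on $\beta$; the weight $\beta$ enters only as a scalar. Hence GIFF is solving a parametric optimization $A^*(\beta)\in\argmax_{A\in\mathcal A}[(1-\beta)U(A)+\beta S(A)]$ over a finite menu $\mathcal A$, and the claim reduces to the standard monotone-selection property of weighted linear objectives.

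First I would reparametrize to make the trade-off weight multiplicative rather than convex. Since $\beta\in[0,1)$ we have $1-\beta>0$, so dividing the objective by $1-\beta$ does not change the $\argmax$ and turns it into $\max_{A\in\mathcal A}[U(A)+t\,S(A)]$ with $t:=\beta/(1-\beta)$. The map $\beta\mapsto\beta/(1-\beta)$ is strictly increasing on $[0,1)$, so $\beta_1<\beta_2$ yields $t_1<t_2$, and it suffices to show that the selected $S$-value is nondecreasing along this one-parameter family in $t$.

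The core is then a two-line interchange argument. Writing $A_1=A^*(\beta_1)$ and $A_2=A^*(\beta_2)$, optimality at $t_1$ gives $U(A_1)+t_1 S(A_1)\ge U(A_2)+t_1 S(A_2)$, and optimality at $t_2$ gives $U(A_2)+t_2 S(A_2)\ge U(A_1)+t_2 S(A_1)$. Adding the two inequalities cancels the $U$-terms entirely and collapses everything to $(t_2-t_1)\bigl(S(A_2)-S(A_1)\bigr)\ge 0$; since $t_2>t_1$, we conclude $S(A_2)\ge S(A_1)$, which is the statement.

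The main subtlety to handle carefully is non-uniqueness of the maximizer: when several allocations tie under the GIFF score at some $\beta$, the chosen $A^*(\beta)$ depends on the tie-breaking rule. I would point out that the interchange bound holds for \emph{every} optimal $A_1$ at $t_1$ and \emph{every} optimal $A_2$ at $t_2$ simultaneously, so $S(A_2)\ge S(A_1)$ holds no matter how ties are resolved; the guarantee is therefore robust to the tie-breaking rule rather than an artifact of one particular selection. I would also note that finiteness of $\mathcal A$ guarantees the maximizers exist, and that the strict bound $\beta<1$ is used in exactly one place, namely to justify dividing by $1-\beta$ in the reparametrization.
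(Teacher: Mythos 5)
Your proposal is correct and follows essentially the same route as the paper's proof: the same reparametrization $\theta=\beta/(1-\beta)$, the same pair of optimality inequalities, and the same cancellation yielding $(\theta_2-\theta_1)\bigl(S(A_2)-S(A_1)\bigr)\ge 0$. Your extra remarks on tie-breaking robustness and where $\beta<1$ is used match the paper's own discussion of scope and tie-breaking.
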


\begin{proof}[Proof Sketch]
Let $\theta = \beta/(1-\beta)$. The GIFF objective is to maximize $G_\theta(A) = U(A) + \theta S(A)$. By optimality, we have $G_{\theta_1}(A_1) \ge G_{\theta_1}(A_2)$ and $G_{\theta_2}(A_2) \ge G_{\theta_2}(A_1)$. Subtracting these two inequalities yields $(\theta_2-\theta_1)(S(A_2)-S(A_1)) \ge 0$. Since $\theta_2 > \theta_1$, we must have $S(A_2) \ge S(A_1)$.
\end{proof}

\begin{corollary}[Strict Increase at a Switch]
\label{cor:strict-switch}
If the maximizer of the GIFF objective is unique at $\beta_1$ and $\beta_2$, and $A^*(\beta_1) \neq A^*(\beta_2)$, then $S(A^*(\beta_2)) > S(A^*(\beta_1))$.
\end{corollary}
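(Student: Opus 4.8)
The plan is to mirror the proof of Theorem~\ref{thm:monotone-surrogate} almost verbatim, replacing each weak optimality inequality with a strict one that the uniqueness hypothesis now licenses. As in that proof, I would set $\theta = \beta/(1-\beta)$, which is strictly increasing on $[0,1)$, so that $\beta_1 < \beta_2$ gives $\theta_1 < \theta_2$; the GIFF objective is $G_\theta(A) = U(A) + \theta S(A)$, maximized over the single fixed feasible set $\mathcal A$. Writing $A_1 = A^*(\beta_1)$ and $A_2 = A^*(\beta_2)$, the hypothesis $A_1 \neq A_2$ means each is a genuine, distinct competitor for the other.

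The key step is to exploit uniqueness. Because $A_1$ is the \emph{unique} maximizer of $G_{\theta_1}$ and $A_2 \neq A_1$ is a feasible alternative, we obtain the strict inequality $G_{\theta_1}(A_1) > G_{\theta_1}(A_2)$; symmetrically, uniqueness of $A_2$ at $\theta_2$ gives $G_{\theta_2}(A_2) > G_{\theta_2}(A_1)$. Expanding both and subtracting—exactly the manipulation used in Theorem~\ref{thm:monotone-surrogate}—yields $(\theta_2 - \theta_1)\bigl(S(A_2) - S(A_1)\bigr) > 0$, where the inequality is now strict rather than weak. Since $\theta_2 > \theta_1$, dividing by the positive factor $\theta_2 - \theta_1$ gives $S(A_2) > S(A_1)$, which is precisely the claim.

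The argument is a routine strengthening of the theorem, so there is no substantive obstacle, but I would be careful about one subtlety: the strict inequalities require that the \emph{same} finite feasible set $\mathcal A$ is in play at both $\beta_1$ and $\beta_2$, so that $A_2$ truly competes against $A_1$ inside $G_{\theta_1}$ (and vice versa). This is guaranteed because Theorem~\ref{thm:monotone-surrogate} fixes a single decision round, so $\mathcal A$ does not vary with $\beta$. I would state explicitly that uniqueness of the maximizer over a finite set means every other feasible allocation attains a \emph{strictly} smaller objective value—this is the only place the additional hypothesis enters, and it is exactly what converts the $\ge$ of the theorem into the $>$ of the corollary.
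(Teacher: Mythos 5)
Your proof is correct and takes essentially the same approach as the paper: both arguments rest on the two optimality inequalities from Theorem~\ref{thm:monotone-surrogate} and the cancellation of the utility terms. The only difference is organizational—the paper argues by contradiction (if $S(A_1)=S(A_2)$ held, the weak inequalities would force $U(A_1)=U(A_2)$, making both allocations maximizers at both parameters and contradicting uniqueness), whereas you make the optimality inequalities strict up front and subtract directly.
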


\subsection{From Surrogate to Realized Fairness Guarantees}
\label{subsec:theory-slack}

Finally, we connect the surrogate guarantees to realized fairness by bounding the \emph{slack}, defined as $\mathrm{slack} := \Delta_{\mathrm{joint}} - S$. By Theorem~\ref{thm:local-gain-lb}, the slack is always non-negative for the metrics considered. We can derive exact expressions or tight bounds for it:
\squishlist
    \item \textbf{$\alpha$-fairness:} The surrogate is exact, so $\mathrm{slack} = 0$.
    \item \textbf{Negative Variance:} The slack is a precisely computable quadratic term, $\mathrm{slack} = \frac{2}{n^2}\sum_{i<j} y_i y_j$.
    \item \textbf{GGF \& Maximin:} The slack depends on whether utility increments cause agents to change ranks (for GGF) or on the uniqueness of the minimum-payoff agent (for maximin).
\squishend

These bounds allow us to translate the monotonicity of the surrogate into a guarantee on realized fairness. When an increase in $\beta$ triggers a switch to a new allocation, if the surrogate $S$ increases by more than the maximum possible slack of the previous allocation, the realized fairness $\Delta_{\mathrm{joint}}$ is guaranteed to strictly increase. For $\alpha$-fairness, any switch to a different allocation with a higher surrogate value implies a strict increase in realized fairness.

In summary, our theoretical results provide a firm foundation for GIFF. We have shown that its core mechanism of using a sum-of-local-gains surrogate is principled (Theorem~\ref{thm:local-gain-lb}), predictable (Theorem~\ref{thm:monotone-surrogate}), and directly translatable into guarantees on realized fairness (Section~\ref{subsec:theory-slack}). This analysis confirms that GIFF is not just a heuristic but a structured framework for improving fairness in complex allocation problems.

\section{Empirical Results}

We show results from two experiments. First, we compare GIFF to an existing domain-specific method for ridesharing, optimizing variance (a distributional metric). Then, we extend this to a new domain of homelessness, minimizing the Gini index. Finally, we look at SWF-based metrics in a domain highlighting the need for the counterfactual advantage correction.


\begin{figure}[t]
    \centering
    \includegraphics[width=0.48\linewidth]{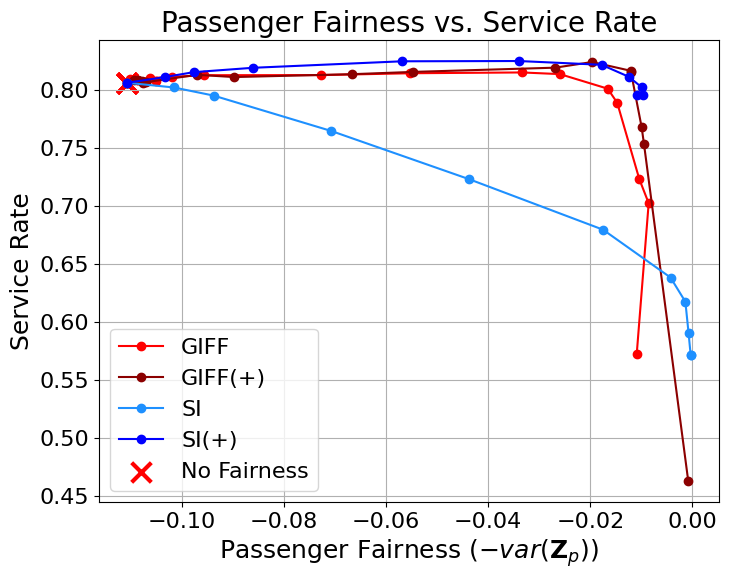}
    \includegraphics[width=0.48\linewidth]{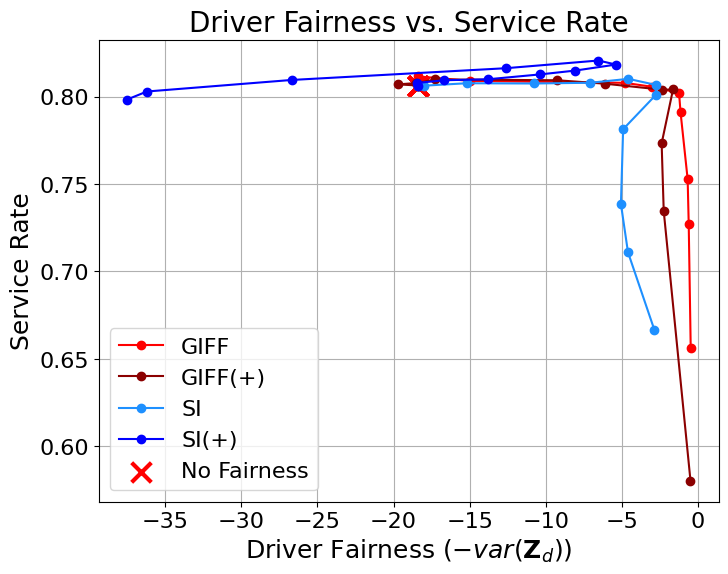}
    \caption{Comparison of fairness versus system utility. Each line is plotted in order of increasing fairness tradeoff weight $\beta$, starting from the red X ($\beta=0$) Top: Passenger fairness (measured as $-var(\mathbf{Z}_p)$) versus overall service rate. Bottom: Driver fairness (measured as $-var(\mathbf{Z}_d)$) versus overall service rate. The red X indicates the baseline performance (raw Q-values without fairness adjustments). 
    }
    \label{fig:SI_results}
\end{figure}

\subsection{Baseline Comparisons in Real-World Domains: Ridesharing}
We test our approach in the complicated ridesharing domain~\citep{shah2020neural, SI_kumar2023}, where passengers are allocated to drivers in a dynamic matching environment. Part of the complexity also arises from the fact that more than one passenger can be allocated to the same driver, sharing the trip, leading to a huge combinatorial search space that is difficult to directly optimize. Each vehicle estimates the Q-values based on groups of passengers, and the central allocation maximizes this Q-value for all drivers, subject to passenger constraints.

We compare our results to SI~\citep{SI_kumar2023}, a recent approach for myopic fairness designed for the ridesharing application. SI's fairness objective is to reduce variance in groups of passengers (measured in terms of service rate) and in drivers (measured as differences in trips assigned). Their approach operates in a centrally constrained resource allocation environment by augmenting the base model with an additive fairness term. In our experiments, we compare both the original SI and its heuristic variant SI(+), which clips negative fairness incentives to focus solely on improvements. We also implement a corresponding heuristic in our method, denoted GIFF(+).

Both GIFF and SI start from the same base model that predicts raw Q-values (indicated by the red X's in Figure~\ref{fig:SI_results}, which represents the baseline with no fairness adjustments). Our goal is to demonstrate that GIFF not only improves fairness over SI, but also avoids the drawbacks of SI(+) at high fairness weights.
For passengers, fairness is measured as the variance in service rate for groups traveling between source and destination neighborhoods ($\mathbf{Z}_p$). For drivers, the objective is to minimize variance in driver income, measured by the number of trips per driver ($\mathbf{Z}_d$). In our simulation, 1000 vehicles are deployed on the island of Manhattan, using a real-world dataset from New York City, capturing passenger requests between 8am and 12pm during the busy morning hours.

Our results, shown in Figure~\ref{fig:SI_results}, demonstrate that GIFF consistently achieves a better fairness-utility tradeoff than SI for both passengers and drivers.
For passengers, GIFF outperforms the base SI method. While the heuristic variant SI(+) provides a marginal improvement, applying the same heuristic to our method (GIFF(+)) matches its performance. This shows that GIFF's core formulation is strong and can be enhanced with simple heuristics when long-term values are unavailable.

The superiority of GIFF is most pronounced for drivers. As the fairness weight $\beta$ increases, GIFF maintains a stable and favorable tradeoff. In contrast, SI(+) becomes counterproductive, eventually degrading fairness to a level \textbf{worse than the baseline} with no fairness adjustments (indicated by the red X). This instability at high fairness weights highlights a critical drawback of the SI(+) heuristic, whereas our method, GIFF, remains robust and effective across the full range of fairness weights.

\xhdr{Fairness with changing $\beta$:}
In SI, the objective is additive ($U +\beta F$), as opposed to GIFF's weighted combination (Eq.~\ref{eq:Objective}). To compare the effect of this tradeoff weight on fairness, we transform the weights for GIFF as $\frac{\beta}{1+\beta}$, and plot the change in fairness for both SI and GIFF with this hyperparameter on a logarithmic scale.  
Figure~\ref{fig:variance_analysis} shows the tradeoff between fairness weight and the variance in utilities $\Z_p$ and $\Z_d$. 

\begin{figure}[t]
    \centering
    \subfloat[Passenger Fairness]{\includegraphics[width=0.49\linewidth]{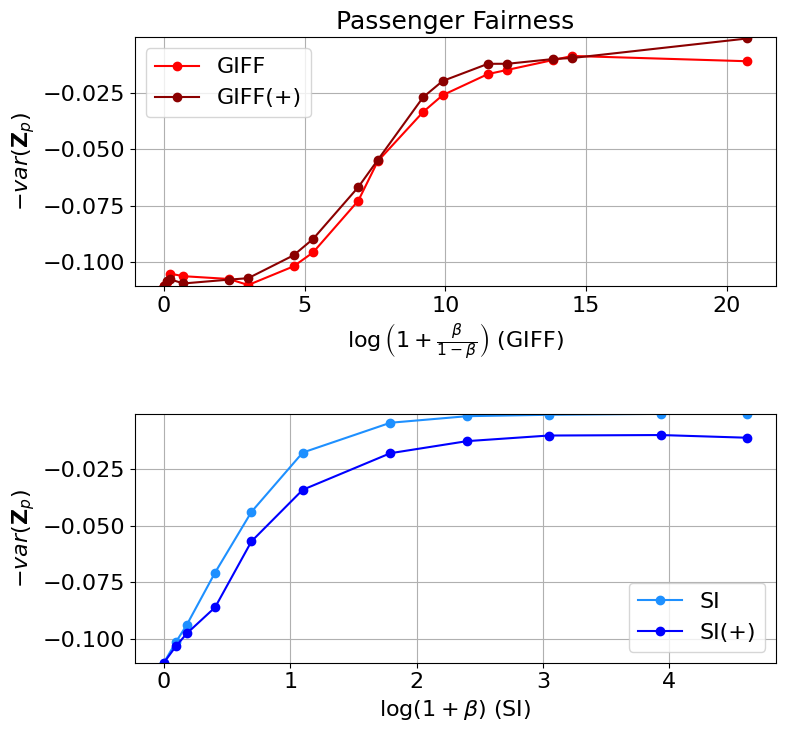}}
    \subfloat[Driver Fairness]{\includegraphics[width=0.49\linewidth]{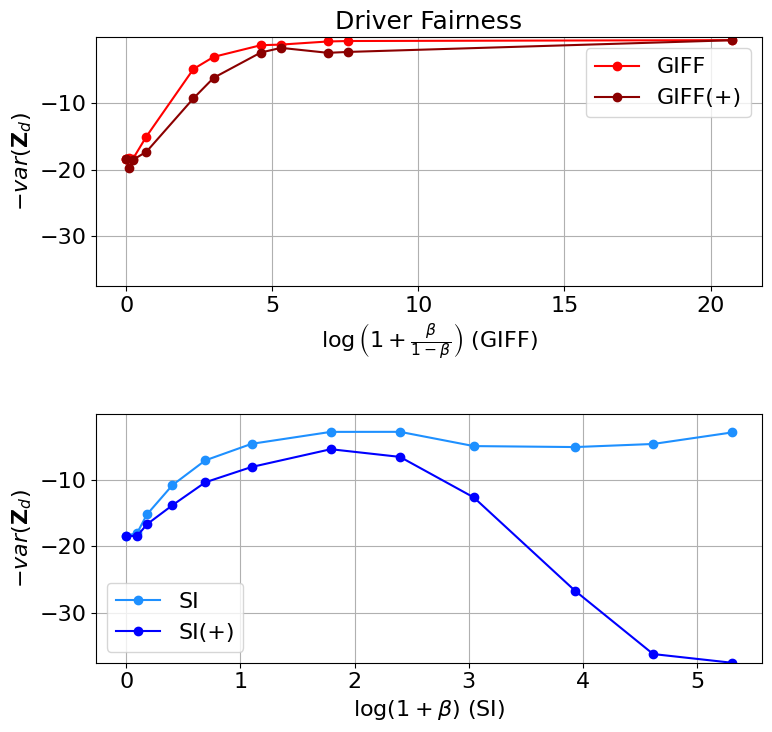}}
    \caption{Variance vs. Fairness Weight. Top row: GIFF results using $\log\left(1+\frac{\beta}{1-\beta}\right)$. Bottom row: SI results using $\log(1+\beta)$. 
    }
    \label{fig:variance_analysis}
\end{figure}

For passengers, both GIFF and SI keep improving fairness as $\beta$ is increased, as shown in the left panels. 
In the top right panel, GIFF continues to lower the variance for drivers as the fairness weight increases. However, SI(+) (bottom right) eventually worsens fairness, even falling below the baseline performance. This demonstrates that GIFF achieves a more stable fairness--utilty tradeoff.

\begin{figure}
    \centering
    \includegraphics[width=0.99\linewidth]{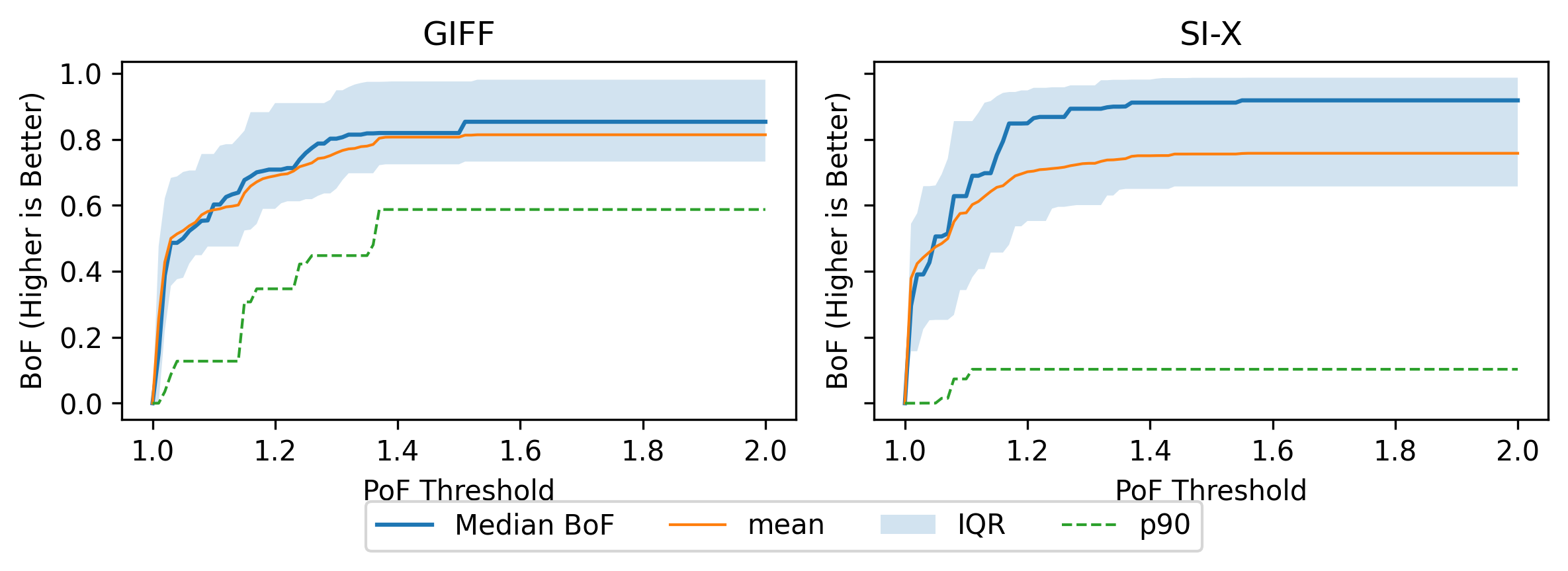}
    \includegraphics[width=0.99\linewidth]{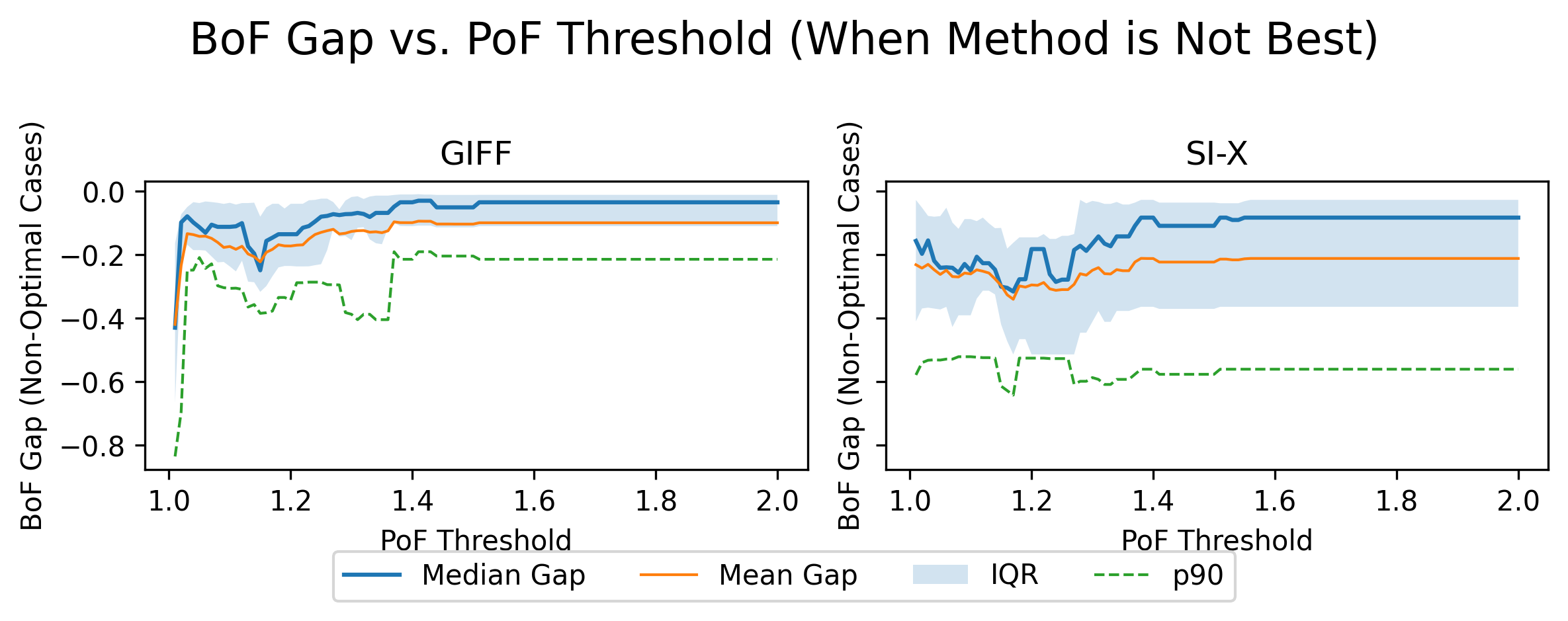}
    \caption{Results for the homelessness dataset. Top: Comparison of the benefit of fairness (BoF) distribution as the price of fairness (PoF) threshold is increased. Bottom: The BoF gap compared to the best method, excluding BoF=0. Each vertical slice corresponds to the distribution over all 38 features.}
    \label{fig:homelessness_results}
\end{figure}

\subsection{Generalization to Homelessness Prevention}
To demonstrate the versatility of our approach beyond dynamic, Q-value-based environments, we test GIFF in the domain of homelessness prevention using a real-world dataset~\cite{kube2019allocating}. Here, the task is to assign households to one of four interventions to minimize the total probability of re-entry into homelessness. The ``cost'' of assigning household $h$ to intervention $a$ is given by a pre-computed counterfactual probability, $\Pr(h,a)$.

Our framework is adapted to this new context by treating it as a cost-minimization problem. We use the negative of the re-entry probabilities as utility values, so that $Q^{\text{GIFF}}(h, a, \beta, \delta) = (1-\beta)\, (-\Pr(h,a)) + \beta\, Q_f(a)$. To further highlight GIFF's flexibility, we move beyond variance reduction and adopt the \textbf{Gini coefficient} as the fairness metric, $F_{\mathrm{gini}} = - \frac{\sum_{i=1}^n \sum_{j=1}^n |z_i - z_j|}{2n \sum_{k=1}^n z_k}$, where $z_i \in [0,1]$ is the average re-entry probability for a given demographic group.

The dataset includes 38 household features (e.g., race, gender, family size), and we run 38 independent experiments, each defining fairness groups based on one feature. This allows us to assess the robustness of each method across a wide variety of fairness definitions. Since the original SI method is not applicable directly, we developed a competitive baseline, Simple Incentives - Extended (SI-X). Further details about the dataset, implementation and SI-X formulation are included in the supplement.

To evaluate performance across these 38 experiments, we introduce two metrics:
\squishlist
    \item \textbf{Price of Fairness (PoF):} The ratio of the total re-entry probability with fairness adjustments to the baseline (fairness-unaware) total probability. A PoF of 1.05 means a 5\% increase in the overall re-entry rate.
    \item \textbf{Benefit of Fairness (BoF):} The percentage reduction in the Gini coefficient compared to the baseline, calculated as $1 - \frac{\text{Gini}(\text{new})}{\text{Gini}(\text{base})}$.
\squishend

Figure~\ref{fig:homelessness_results} summarizes the results by plotting the distribution of BoF achieved for a given PoF threshold. Each vertical slice represents the BoF distribution over all 38 feature-based groupings.
The top panel shows that GIFF is a more effective and reliable method. On 90\% of the features, GIFF is able to get 60\% improvement in $F_{gini}$ compared to the baseline. GIFF consistently yields a higher \textbf{mean} BoF. More importantly, GIFF demonstrates superior worst-case performance; its 90th percentile is substantially higher than that of SI-X, indicating that GIFF avoids the severe fairness failures that SI-X is prone to on certain groups. 
The bottom panel reinforces this conclusion by analyzing the \textbf{BoF Gap}, which measures how much a method underperforms \textit{only on the tasks where it was not the best}. The gap for GIFF is minimal and concentrated near zero, showing that even when it is not the top performer, it is a close second. In contrast, SI-X exhibits a wide gap distribution, with its 90th percentile exceeding 0.4. This means that when SI-X fails, it fails badly, achieving fairness outcomes that are dramatically worse than what is possible with GIFF.

Overall, these results in a distinct problem domain with a non-linear fairness metric confirm that GIFF is a robust and broadly applicable framework for integrating fairness into resource allocation systems.

\begin{figure}[t]
    \centering
    \begin{subfigure}{1\linewidth}
        \centering
        \includegraphics[width=0.495\linewidth]{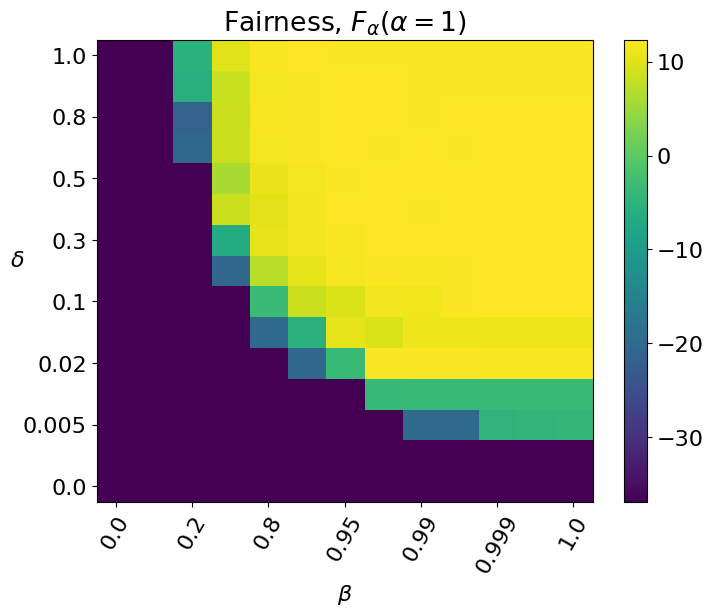}
        \includegraphics[width=0.495\linewidth]{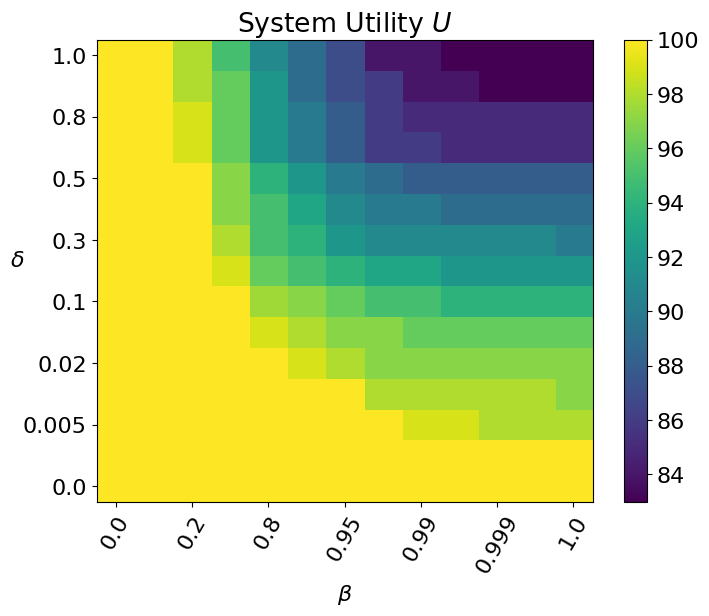}
        \caption{$\alpha$-fair results}
        \label{fig:alf_results_sub}
    \end{subfigure}
    \hfill
    \begin{subfigure}{1\linewidth}
        \centering
        \includegraphics[width=0.495\linewidth]{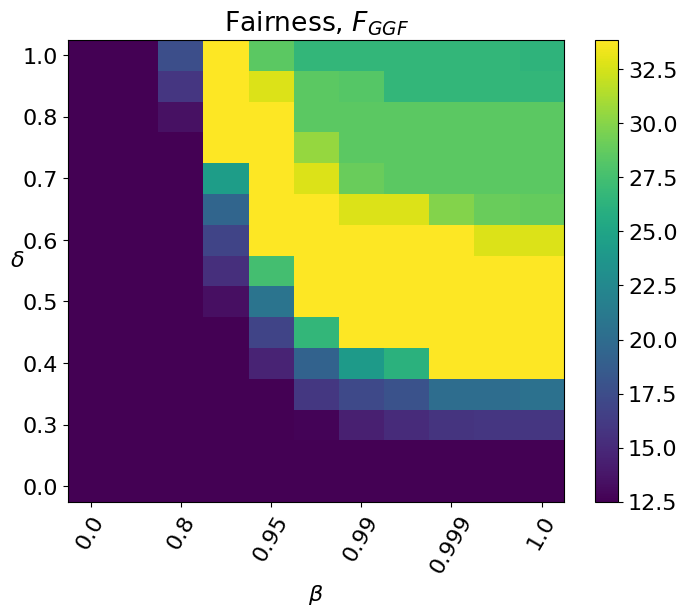}
        \includegraphics[width=0.495\linewidth]{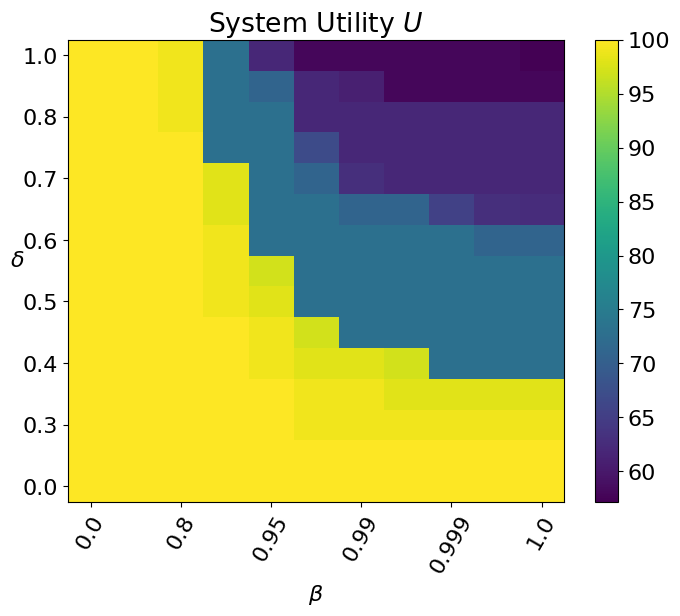}
        \caption{GGF results}
        \label{fig:ggf_results_sub}
    \end{subfigure}
    \caption{Fairness and utility for the Job Allocation environment as functions of $\beta$ and $\delta$.}
    \label{fig:subfig_results}
\end{figure}

\subsection{Job Allocation and Advantage Correction}

We evaluate our method in a challenging Job Allocation environment where 4 agents compete for a single job over 100 time steps. An agent occupying the job earns reward but must \texttt{forfeit} it (earning no reward for that step) to allow another agent to take over. This setup creates a tension between a greedy strategy (one agent gets ~100 utility), a simple turn-taking strategy (50 total utility, fairly split), and a hard-to-compute \textbf{oracle solution} that achieves 96 total utility (24 per agent). Our goal is to show that our method can discover this oracle solution via a simple hyperparameter search over $\beta$ and $\delta$, without needing to plan over the full time horizon.

Our approach relies on the advantage correction term, $\Delta Q_\text{adv}$. We test two \textbf{Efficient} fairness metrics ($\alpha$-fair and GGF), where the fairness gain $\Delta F$ is always positive. Consequently, without the correction term, an agent would never forfeit the job, and even a fully fairness-focused policy ($\beta=1$) would produce the same unfair outcome as a purely utilitarian one ($\beta=0$). The advantage correction solves this by reducing the value of actions that offer below-average fairness gains to agents that are already better off, thereby encouraging equitable turn-taking.

Our results, shown in Figure~\ref{fig:subfig_results}, confirm that this mechanism successfully balances utility and fairness:
\squishlist
    \item With the \textbf{$\alpha$-fair} metric (Figure~\ref{fig:alf_results_sub}), the policy achieves near-optimal performance with a moderate $\delta \approx 0.1$. As $\delta$ increases, fair behavior emerges at lower $\beta$ values.
    \item With the \textbf{GGF} metric (Figure~\ref{fig:ggf_results_sub}), the impact of advantage correction becomes noticeable around $\delta \approx 0.3$. This metric reveals a distinct band of high-utility, high-fairness solutions before the policy converges to a simple turn-taking strategy when both $\beta$ and $\delta$ are high. This difference occurs because the logarithmic nature of $\alpha$-fairness is less sensitive to utility changes once all agents are doing reasonably well.
\squishend

In both experiments, we observe that as $\delta$ increases, the transition from utilitarian to fair behavior happens at a smaller $\beta$. Crucially, our method identifies the optimal long-term strategy through a simple evaluation-only grid search, without any information about the time horizon. This demonstrates that the advantage correction term is a powerful and essential mechanism for achieving complex, far-sighted fairness without explicit planning.
\section{Discussion and Conclusion}

In this work, we introduced GIFF, a general and lightweight framework that integrates fairness into multi-agent resource allocation systems. By modifying pre-trained Q-values with a local fairness gain and a crucial counterfactual advantage correction, our method adjusts allocations to be more equitable without requiring any additional training. Our empirical evaluation showcased GIFF's power and versatility across diverse and challenging domains. In ridesharing, it consistently outperformed a strong, domain-specific baseline and demonstrated superior stability. In homelessness prevention, it proved its generality, adapting seamlessly with a Gini coefficient fairness metric. Finally, in the job allocation environment, the advantage correction mechanism was essential in discovering a near-oracle, long-horizon solution, demonstrating GIFF's power.

The practical strengths of GIFF are underpinned by a solid theoretical foundation. Its design is simple, with minimal computational overhead and just two interpretable hyperparameters ($\beta$,$\delta$) that allow for easy tuning of the fairness-utility trade-off at deployment. We formally proved that this is not merely a heuristic approach; GIFF's fairness surrogate is a principled lower bound on the true, realized fairness improvement for several canonical metrics. Furthermore, we showed that the fairness weight $\beta$ provides a monotonic guarantee, ensuring that increasing the emphasis on fairness predictably improves the surrogate objective.

In conclusion, GIFF provides a practical, powerful, and principled bridge between the efficiency of reinforcement learning and the critical societal need for equity, representing a significant step toward creating multi-agent systems that are not only optimal but also just.






\bibliographystyle{ACM-Reference-Format} 
\bibliography{main}

\clearpage
\newpage

\appendix

Here we present the supplementary material complementing the main paper, including full proofs for the theoretical results and extended discussion.

\section{Theoretical Results}
\label{sec-app:giff_theory_results}

In this section, we present the full proofs for the core theoretical properties of GIFF.  
Throughout this section, let $\Z=(z_1,\dots,z_n)\in\mathbb{R}^n$ denote the current payoff vector, and let $y=(y_1,\dots,y_n)\in\mathbb{R}^n_{\ge 0}$ denote the increments from a feasible allocation in the current round.  
For a fairness function $F:\mathbb{R}^n\to\mathbb{R}$ we define:
\begin{align}
\Delta_{\mathrm{joint}} &:= F(\Z+y)-F(\Z),\\
\Delta^{\mathrm{local}}_i &:= F(\Z+y_i e_i)-F(\Z)\\
S &:= \sum_{i=1}^n \Delta^{\mathrm{local}}_i
\end{align}
where $e_i$ is the $i$-th unit vector. We call $\Delta_{\mathrm{joint}}$ the \emph{realized fairness improvement}, $\Delta^{\mathrm{local}}_i$ the \emph{local fairness gain} for agent $i$, and $S$ the \emph{sum of local gains}, which we term as GIFF’s \textbf{surrogate}.  

Our proofs rely on the following assumption:
\begin{assumption}[Nonnegative increments]
\label{asmp-full:nonneg-y}
Any coordinate's change in utility is nonnegative for any allocation: $y_i \ge 0$ for all $i$.
\end{assumption}
\begin{assumption}[Q-value correctness]
\label{asmp-full:Q-corr}
For every agent $i$ and action $a$, the estimated Q-value equals the true utility increment:
\[
Q(o_i,a) \;=\; \Delta z_i(a),
\]
where $\Delta z_i(a)$ denotes the change in payoff $z_i$ that results from assigning action $a$ to agent $i$.
In other words, Q-values are perfectly accurate predictors of payoff increments.
\end{assumption}

Our results fall into three families:
\begin{enumerate}
  \item A \emph{Local–Gain Lower Bound}, showing that the surrogate never overstates realized fairness improvements.
  \item A \emph{Monotonicity in $\beta$}, showing that surrogate fairness is nondecreasing as the fairness weight $\beta$ increases.
  \item \emph{Slack bounds}, quantifying the gap between surrogate and realized fairness to turn surrogate guarantees into realized guarantees.
\end{enumerate}

\subsection{Local–Gain Lower Bound}
\label{subsec-app:theory-lg-lb}

Our first main result establishes that GIFF’s per-agent local gains form a certified lower bound on the realized joint fairness change. The result covers four canonical metrics: $\alpha$-fairness, negative variance, generalized Gini (GGF), and maximin.


\begin{theorem}[Local–Gain Lower Bound]
\label{thm-full:local-gain-lb}
Let $\Z=(z_1,\dots,z_n)\in\mathbb{R}^n$ be a payoff vector and let
$y=(y_1,\dots,y_n)\in\mathbb{R}^n_{\ge 0}$ be a nonnegative increment vector.
For each fairness function $F$ below, the realized joint gain dominates the sum
of local gains:
\begin{align}
\Delta_{\mathrm{joint}}
&:= F(\Z+y)-F(\Z)
\;\;\ge\;\;
\sum_{i=1}^n \Delta^{\mathrm{local}}_i\\
\Delta^{\mathrm{local}}_i
&:= F(\Z+y_i e_i)-F(\Z)
\end{align}
under the stated assumptions:
\begin{enumerate}
\item $F=F_\alpha$ ($\alpha$-fairness): $\alpha\ge 0$ and $z_i,\ z_i+y_i$ lie in the domain
      of $U_\alpha$ (so $z_i>0$ and $z_i+y_i>0$ when $\alpha=1$).
\item $F=F_{\mathrm{var}}$ (negative variance): no further assumptions (beyond $y\ge 0$).
\item $F=F_{\mathrm{GGF}}$ (generalized Gini) with nonincreasing weights $w_1\ge\cdots\ge w_n$.
\item $F=F_{\min}$ (maximin): no further assumptions (beyond $y\ge 0$).
\end{enumerate}
Moreover, equality holds in (1); in (2) equality holds iff at most one $y_i>0$;
and in (4) equality conditions are given in Lemma~\ref{lem-full:maximin}.
\end{theorem}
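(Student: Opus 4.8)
The plan is to treat the four fairness functions separately, exploiting the fact that two of them ($F_\alpha$ and $F_{\mathrm{var}}$) are smooth and (partially) separable, while the remaining two ($F_{\mathrm{GGF}}$ and $F_{\min}$) are piecewise-linear functions of the order statistics that require a more delicate argument. In every case the target inequality can be rewritten as the superadditivity-of-increments statement $F(\Z+y) + (n-1)F(\Z) \ge \sum_i F(\Z+y_i e_i)$, and the nonnegativity of $y$ (Assumption~\ref{asmp-full:nonneg-y}) will be the property that drives each bound.

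For $F_\alpha = \sum_i U_\alpha(z_i)$ I would simply invoke separability: changing coordinate $i$ alone leaves every other summand untouched, so $\Delta^{\mathrm{local}}_i = U_\alpha(z_i+y_i)-U_\alpha(z_i)$, and summing these reproduces $\Delta_{\mathrm{joint}}$ exactly, giving equality. For negative variance I would write $F_{\mathrm{var}}(\Z) = -\tfrac1n\sum_i z_i^2 + \tfrac1{n^2}\bigl(\sum_i z_i\bigr)^2$ and compute $\Delta_{\mathrm{joint}}-S$ term by term. The $\sum_i z_i^2$ part is separable and contributes nothing to the difference; the surviving contribution comes from the squared-sum term, where the joint increment produces $\bigl(\sum_i y_i\bigr)^2$ while the local increments produce $\sum_i y_i^2$, so the difference is $\tfrac1{n^2}\bigl[(\sum_i y_i)^2-\sum_i y_i^2\bigr]=\tfrac{2}{n^2}\sum_{i<j}y_iy_j\ge 0$. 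This simultaneously yields the slack formula and the equality condition (at most one $y_i>0$).

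The main obstacle is $F_{\mathrm{GGF}}$, because naive supergradient bounds at $\Z$ and $\Z+y$ are too loose once the increments reorder coordinates. The plan is to peel GGF into layers: by Abel summation, $F_{\mathrm{GGF}}(\Z) = \sum_{k=1}^n (w_k-w_{k+1})\,S_k(\Z)$ with $w_{n+1}:=0$, where $S_k(\Z)$ is the sum of the $k$ smallest coordinates; since the weights are nonincreasing, every coefficient $w_k-w_{k+1}$ is nonnegative, so by linearity it suffices to prove the inequality for each $S_k$. For a fixed $k$ I would let $B$ be the index set of the $k$ smallest coordinates of $\Z$ and $c=z_{(k+1)}$ the next value, and show that the local gain of $S_k$ is $\min(y_i,c-z_i)$ for $i\in B$ and $0$ otherwise. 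For the joint gain I would use that $S_k$ is coordinatewise nondecreasing together with a capping construction: the thresholded vector $\Z'$ with $z_i'=\min(z_i+y_i,c)$ on $B$ and $z_i'=z_i$ off $B$ satisfies $\Z+y\ge\Z'$ coordinatewise, and by design the $k$ smallest entries of $\Z'$ are exactly $B$, so $S_k(\Z+y)\ge S_k(\Z')=S_k(\Z)+\sum_{i\in B}\min(y_i,c-z_i)$. This is precisely the sum of local gains, establishing superadditivity layer by layer and hence for GGF.

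Finally, for $F_{\min}=\min_i z_i$ I would run a case analysis on the argmin set $M=\{i:z_i=\min_j z_j\}$. When $|M|\ge 2$ every local gain vanishes, so $S=0$, and the bound reduces to $\min_i(z_i+y_i)\ge\min_j z_j$, which holds because $y\ge 0$. When the minimizer $i^*$ is unique, the only nonzero local gain is $\min(y_{i^*},z_{(2)}-m)$ with $m$ the minimum and $z_{(2)}$ the second-smallest value; bounding $\min_i(z_i+y_i)\ge\min(m+y_{i^*},z_{(2)})$ then yields $\Delta_{\mathrm{joint}}\ge\min(y_{i^*},z_{(2)}-m)=S$, with the equality conditions of Lemma~\ref{lem-full:maximin} falling out of when these bounds are tight. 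The hardest piece throughout is the GGF layer argument; once the $S_k$ decomposition and the capping/monotonicity trick are in place, the rest is bookkeeping.
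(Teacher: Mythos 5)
Your proof is correct and, for three of the four metrics, follows the paper's route exactly: separability gives equality for $F_\alpha$; expanding $-\mathrm{Var}$ isolates the cross-term $\tfrac{2}{n^2}\sum_{i<j}y_iy_j\ge 0$ and its equality condition; and the maximin case is the same argmin-set case analysis as Lemma~\ref{lem-full:maximin}. The genuine divergence is in the GGF case. Both you and the paper use the Abel-summation decomposition $F_{\mathrm{GGF}}=\sum_k (w_k-w_{k+1})S_k$ with nonnegative coefficients and reduce to proving the bound for each partial sum $S_k$ of the $k$ smallest entries, but the per-layer arguments differ. The paper computes local updates via the sets $L$ (strictly below the $k$-th value) and $T$ (tied at it) and bounds the joint update by restricting the minimizing $k$-subset; you instead derive the exact local gain $\min(y_i,\,z_{(k+1)}-z_i)$ for $i$ in a bottom-$k$ index set $B$ (and $0$ otherwise) and lower-bound the joint gain through the capped vector $\Z'$ with $z_i'=\min(z_i+y_i,\,z_{(k+1)})$ on $B$, using coordinatewise monotonicity of $S_k$. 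Your version is the more careful of the two: it correctly handles increments large enough to push an agent past rank $k+1$, a regime in which the paper's displayed local-update formula (gain $=y_i$ for $i\in L$) overstates the local gain, and in which restricting the feasible subsets inside $\min_{|J|=k}$ bounds $S_k(\Z+y)$ from above rather than below. Your capping construction sidesteps both issues and yields the inequality with the exact local gains on the right-hand side. The only bookkeeping you should add is the edge case $k=n$, where $z_{(n+1)}$ does not exist: set $c:=+\infty$ there, in which case $S_n$ is separable and the layer inequality holds with equality.
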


\begin{lemma}[$\alpha$-fairness: exact additivity]
\label{lem-full:alpha}
Let $F_\alpha(\Z)=\sum_{i=1}^n U_\alpha(z_i)$ with
$U_\alpha(t)=\frac{t^{1-\alpha}}{1-\alpha}$ for $\alpha\neq 1$ (domain $t\ge 0$) and
$U_1(t)=\log t$ (domain $t>0$). If $z_i$ and $z_i+y_i$ are in the domain for all $i$, then
\[
F_\alpha(\Z+y)-F_\alpha(\Z)
=
\sum_{i=1}^n\!\bigl[F_\alpha(\Z+y_i e_i)-F_\alpha(\Z)\bigr].
\]
\end{lemma}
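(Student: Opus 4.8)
The plan is to exploit the complete \emph{separability} of $F_\alpha$: since $F_\alpha(\Z)=\sum_{i=1}^n U_\alpha(z_i)$ is a sum of single-variable functions, each depending on exactly one coordinate, perturbing one coordinate affects only the corresponding summand. All the additivity then follows by bookkeeping, with no convexity or inequality argument needed — this is precisely why the lemma yields exact equality rather than a one-sided bound like the other metrics in Theorem~\ref{thm-full:local-gain-lb}.

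First I would expand the joint difference directly. Writing $F_\alpha(\Z+y)-F_\alpha(\Z)=\sum_{i=1}^n U_\alpha(z_i+y_i)-\sum_{i=1}^n U_\alpha(z_i)$ and combining termwise yields $\sum_{i=1}^n\bigl[U_\alpha(z_i+y_i)-U_\alpha(z_i)\bigr]$, since the $j$-th summand of each sum involves only $z_j$ (or $z_j+y_j$) and there is no coupling across coordinates.

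Next I would compute a single local gain. The vector $\Z+y_i e_i$ agrees with $\Z$ in every coordinate except the $i$-th, where it equals $z_i+y_i$. Hence $F_\alpha(\Z+y_i e_i)=U_\alpha(z_i+y_i)+\sum_{j\neq i}U_\alpha(z_j)$, and subtracting $F_\alpha(\Z)=\sum_j U_\alpha(z_j)$ leaves $\Delta^{\mathrm{local}}_i=U_\alpha(z_i+y_i)-U_\alpha(z_i)$, with all $j\neq i$ terms cancelling. Summing over $i$ reproduces exactly the expression obtained above for $\Delta_{\mathrm{joint}}$, which establishes the claimed equality.

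There is no genuine obstacle here; the only point demanding care is well-definedness of each function evaluation. I would invoke the lemma's hypothesis that both $z_i$ and $z_i+y_i$ lie in the domain of $U_\alpha$ — in particular the strict positivity required when $\alpha=1$ so that $\log$ is defined — to guarantee that every term written above is legitimate. It is worth noting that Assumption~\ref{asmp-full:nonneg-y} is not actually needed for the equality itself, since separability holds regardless of the sign of each $y_i$; it is the domain conditions, and not nonnegativity, that are essential in this case.
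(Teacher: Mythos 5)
Your proposal is correct and follows essentially the same route as the paper's proof: both exploit the separability of $F_\alpha$ to write the joint difference as $\sum_i\bigl[U_\alpha(z_i+y_i)-U_\alpha(z_i)\bigr]$, identify each local gain with the same termwise quantity, and sum. Your added observation that only the domain conditions (not nonnegativity of $y$) are needed for the equality is accurate but does not change the argument.
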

\begin{proof}
By separability,
$F_\alpha(\Z+y)-F_\alpha(\Z)=\sum_i[U_\alpha(z_i+y_i)-U_\alpha(z_i)]$,
and for a single-coordinate update
$F_\alpha(\Z+y_i e_i)-F_\alpha(\Z)=U_\alpha(z_i+y_i)-U_\alpha(z_i)$.
Summing over $i$ yields the identity.
\end{proof}

\begin{lemma}[Negative variance: nonnegative synergy]
\label{lem-full:variance}
Let $F_{\mathrm{var}}(\Z)=-\mathrm{Var}(\Z)$ with
$\mathrm{Var}(\Z)=\frac{1}{n}\sum_{i=1}^n (z_i-\mu)^2$ and $\mu=\frac{1}{n}\sum_i z_i$.
If $y\ge 0$, then
\[
F_{\mathrm{var}}(\Z+y)-F_{\mathrm{var}}(\Z)
\;\;\ge\;\;
\sum_{i=1}^n \bigl[F_{\mathrm{var}}(\Z+y_i e_i)-F_{\mathrm{var}}(\Z)\bigr],
\]
with equality iff at most one $y_i>0$.
\end{lemma}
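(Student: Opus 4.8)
The plan is to exploit the fact that negative variance splits into a coordinatewise-separable piece and a single non-separable term that depends only on the total payoff, and to show that all of the ``synergy'' in the lemma is produced by that one term. Concretely, I would write $F_{\mathrm{var}}(\Z) = -\tfrac{1}{n}\sum_i z_i^2 + \tfrac{1}{n^2}\bigl(\sum_i z_i\bigr)^2$, which follows from $\mathrm{Var}(\Z)=\tfrac1n\sum_i z_i^2-\mu^2$. The first term is a sum of functions of individual coordinates, so its contribution to the joint change $\Delta_{\mathrm{joint}}$ is identical to its contribution to the summed local gains $S$; this piece therefore cancels entirely in the difference $\Delta_{\mathrm{joint}}-S$. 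The only term that can create a gap is the ``square-of-the-sum'' term, which is a convex function of the aggregate $P:=\sum_i z_i$.

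Next I would isolate that aggregate term. Writing $g(x)=x^2/n^2$ and $Y:=\sum_i y_i$, the joint update changes it by $g(P+Y)-g(P)$, whereas summing the single-coordinate updates changes it by $\sum_i\bigl[g(P+y_i)-g(P)\bigr]$. Expanding both and using $\sum_i y_i = Y$, the linear-in-$P$ cross terms $2PY$ cancel and the diagonal squares $\sum_i y_i^2$ subtract off, leaving exactly $\tfrac{1}{n^2}\bigl(Y^2-\sum_i y_i^2\bigr)$. Since $F_{\mathrm{var}}$ carries this term with a $+$ sign, I would conclude that $\Delta_{\mathrm{joint}}-S = \tfrac{1}{n^2}\bigl((\sum_i y_i)^2-\sum_i y_i^2\bigr) = \tfrac{2}{n^2}\sum_{i<j} y_i y_j$, matching the slack expression claimed in Theorem~\ref{thm-full:local-gain-lb}.

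Finally, nonnegativity is immediate: under Assumption~\ref{asmp-full:nonneg-y} every $y_i\ge 0$, so each product $y_iy_j\ge 0$ and hence $\Delta_{\mathrm{joint}}\ge S$. For the equality case I would note that the sum $\sum_{i<j}y_iy_j$ of nonnegative terms vanishes if and only if no two increments are simultaneously positive, i.e.\ at most one $y_i>0$, which is exactly the stated condition.

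I do not anticipate a genuine obstacle here; the work is essentially careful bookkeeping. The one place to be attentive is the cancellation step: I must confirm that both the separable sum-of-squares term and the $2PY$ cross term cancel between $\Delta_{\mathrm{joint}}$ and $S$, so that the residual is purely the off-diagonal product $\sum_{i<j}y_iy_j$ and is not contaminated by any $z_i$- or $P$-dependent terms. Identifying variance as ``separable plus a function of the total'' at the outset is what makes this cancellation transparent and keeps the algebra from becoming error-prone.
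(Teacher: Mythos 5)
Your proposal is correct and follows essentially the same route as the paper: both use the identity $F_{\mathrm{var}}(\Z)=\mu^2-\tfrac{1}{n}\sum_i z_i^2$ to split off the separable part, observe that it cancels between the joint and summed local changes, and reduce the residual to $\tfrac{1}{n^2}\bigl(Y^2-\sum_i y_i^2\bigr)=\tfrac{2}{n^2}\sum_{i<j}y_iy_j\ge 0$ with the same equality condition. No gaps.
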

\begin{proof}
Use $F_{\mathrm{var}}(\Z)=\mu^2-\frac{1}{n}\sum_i z_i^2$.
Let $S=\sum_i z_i$ and $Y=\sum_i y_i$. Then
\[
F_{\mathrm{var}}(\Z+y)-F_{\mathrm{var}}(\Z)
=\frac{2SY+Y^2}{n^2}-\frac{2}{n}\sum_i z_i y_i-\frac{1}{n}\sum_i y_i^2.
\]
For a single update $y_i$,
$F_{\mathrm{var}}(\Z+y_i e_i)-F_{\mathrm{var}}(\Z)
=\frac{2S y_i+y_i^2}{n^2}-\frac{2}{n}z_i y_i-\frac{1}{n}y_i^2$.
Summing over $i$ gives
\[
\sum_i \Delta_i^{\mathrm{local}}
=\frac{2SY}{n^2}-\frac{2}{n}\sum_i z_i y_i
+\Bigl(\frac{1}{n^2}-\frac{1}{n}\Bigr)\sum_i y_i^2.
\]
Subtracting yields
\(
\Delta_{\mathrm{joint}}-\sum_i \Delta_i^{\mathrm{local}}
=\frac{Y^2-\sum_i y_i^2}{n^2}
=\frac{2}{n^2}\sum_{i<j} y_i y_j\ge 0,
\)
with equality iff at most one $y_i>0$.
\end{proof}

\begin{lemma}[GGF: joint gain dominates locals]
\label{lem-full:ggf}
Let $F_{\mathrm{GGF}}(\Z)=\sum_{k=1}^n w_k\, z_{(k)}$ with nonincreasing weights
$w_1\ge\cdots\ge w_n$ (and $w_{n+1}:=0$), where $z_{(1)}\le\cdots\le z_{(n)}$ are the
sorted entries of $\Z$. If $y\in\mathbb{R}^n_{\ge 0}$, then
\[
F_{\mathrm{GGF}}(\Z+y)-F_{\mathrm{GGF}}(\Z)
\;\;\ge\;\;
\sum_{i=1}^n\!\bigl[F_{\mathrm{GGF}}(\Z+y_i e_i)-F_{\mathrm{GGF}}(\Z)\bigr].
\]
\end{lemma}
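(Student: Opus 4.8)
The plan is to reduce the claim to a supermodularity property of $F_{\mathrm{GGF}}$, mirroring the nonnegative cross-term that drove the negative-variance case in Lemma~\ref{lem-full:variance}. Concretely, the inequality $\Delta_{\mathrm{joint}}\ge S$ is exactly the statement that the joint increment dominates the sum of single-coordinate increments, which holds for any function whose marginal value in each coordinate is nondecreasing in the other coordinates. So the first task is to isolate this monotonicity for $F_{\mathrm{GGF}}$, and the cleanest way to expose it is to peel off the order-statistic structure rather than work with the sorted weighted sum directly.

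First I would rewrite $F_{\mathrm{GGF}}$ via summation by parts. Writing $\phi_k(\Z):=\sum_{j=1}^k z_{(j)}$ for the sum of the $k$ smallest coordinates and setting $w_{n+1}:=0$, Abel summation gives $F_{\mathrm{GGF}}(\Z)=\sum_{k=1}^n (w_k-w_{k+1})\,\phi_k(\Z)$, where every coefficient $w_k-w_{k+1}\ge 0$ by the nonincreasing-weights hypothesis. Since the target inequality $\Delta_{\mathrm{joint}}\ge S$ is linear in $F$ and is preserved under nonnegative combinations, it suffices to prove the lemma with $F=\phi_k$ for each fixed $k$ and then lift back through the coefficients $w_k-w_{k+1}$.

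The core step is the inequality for $\phi_k$. The key observation is that $\phi_k$ has a simple marginal structure: increasing coordinate $i$ raises $\phi_k$ at unit rate while $z_i$ sits among the $k$ smallest entries, and has no effect once $z_i$ rises above the $(k{+}1)$-st smallest. Moreover, raising any other coordinate $z_j$ can only push $z_j$ out of the bottom-$k$ set and pull a previously-excluded coordinate in — never the reverse — so the per-coordinate marginal value of $z_i$ is nondecreasing in every $z_j$, $j\neq i$. To convert this into the finite-increment inequality, I would telescope the joint gain along any fixed coordinate order, $\Delta_{\mathrm{joint}}=\sum_i\bigl[\phi_k(\Z+y^{(i)})-\phi_k(\Z+y^{(i-1)})\bigr]$ with $y^{(i)}=\sum_{j\le i}y_j e_j$, and compare each term to the corresponding local gain $\phi_k(\Z+y_i e_i)-\phi_k(\Z)$. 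Because $\Z+y^{(i-1)}\ge \Z$ coordinatewise and the marginal value of coordinate $i$ is monotone in the other coordinates, each telescoped term dominates its local counterpart, and summing yields $\Delta_{\mathrm{joint}}\ge S$ for $\phi_k$.

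I expect the main obstacle to be making the monotone-marginal claim rigorous over finite increments and in the presence of ties, since adding $y$ can make several coordinates cross ranks simultaneously and the order statistics are only piecewise linear. I would handle this by arguing directly at the level of the bottom-$k$ index sets — comparing which coordinates lie in the bottom $k$ before and after each single-coordinate bump and showing that membership can only be gained, not lost, by the coordinates competing with the one being incremented — or equivalently by invoking the representation $\phi_k(\Z)=\min_{|T|=k}\sum_{i\in T} z_i$, from which supermodularity of the ``sum of the $k$ smallest'' follows as a classical lattice fact. Either route bypasses differentiability; the bookkeeping of rank crossings is the only delicate part, and the equality cases can be read off from exactly when no such crossing is triggered.
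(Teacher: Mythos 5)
Your proposal is correct, and its opening move---Abel summation to write $F_{\mathrm{GGF}}(\Z)=\sum_k (w_k-w_{k+1})\,\phi_k(\Z)$ with $\phi_k$ the sum of the $k$ smallest coordinates, thereby reducing the lemma to the single family $\phi_k$---is exactly the paper's first step. Where you genuinely diverge is in how the inequality for $\phi_k$ is established. The paper argues by explicit computation: it fixes the threshold $\tau=z_{(k)}$, partitions indices into the strictly-below set $L$ and the tied set $T$, writes closed-form expressions for each local gain, and lower-bounds the joint gain by restricting the minimizing $k$-subset of $\Z+y$ to sets of the form $L\cup C$ with $C\subseteq T$. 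You instead establish that $\phi_k$ has increasing differences (equivalently, is supermodular, via the representation $\phi_k(\Z)=\min_{|T|=k}\sum_{i\in T}z_i$ or the monotone-marginal argument), and then telescope the joint increment coordinate by coordinate, comparing each telescoped term to its local counterpart; this is the standard supermodularity-implies-superadditivity-of-increments argument. Your route buys real robustness: it treats arbitrary increment magnitudes and rank crossings uniformly, whereas the paper's local-gain formula (its display $(*)$) is asserted as an equality but overstates the local gain when a large $y_i$ pushes coordinate $i$ out of the bottom-$k$ set (e.g.\ $\Z=(0,10)$, $y=(100,0)$, $k=1$ gives a true local gain of $10$, not $y_1=100$); the paper's final inequality survives only because its joint-gain bound is measured against the same overestimate. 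The one point you must still nail down is the increasing-differences property itself over finite increments and ties; either of your suggested routes closes it, most cleanly by integrating the a.e.\ marginal $\partial\phi_k/\partial z_i=\mathbf{1}\{z_i\ \text{ranks at most}\ k\}$, which is nondecreasing in every other coordinate, so the finite-increment comparison needed in your telescoping step follows without any case analysis on ties.
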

\begin{proof}
Use the equivalent form
\(
F_{\mathrm{GGF}}(\Z)=\sum_{k=1}^n v_k\, S_k(\Z)
\)
with
\(v_k:=w_k-w_{k+1}\ge 0\) and \(S_k(\Z)=\sum_{j=1}^k z_{(j)}\).
Since $v_k\ge 0$, it is enough to prove the claim for each $S_k$ and sum with weights $v_k$.

Fix $k$ and let $\tau$ be the $k$-th smallest value in $\Z$.
Set
\(
L=\{i:\ z_i<\tau\},\;
T=\{i:\ z_i=\tau\},\;
c=k-|L|.
\)
Any $k$-subset achieving $S_k(\Z)$ must include all of $L$ and exactly $c$ indices from $T$.

\emph{Local updates.}
For a single-coordinate increase $y_i\ge 0$,
\[
S_k(\Z+y_i e_i)-S_k(\Z)=
\begin{cases}
y_i, & i\in L,\\
y_i, & i\in T\ \text{and}\ c=|T|,\\
0,   & \text{otherwise}.
\end{cases}
\]
Therefore
\[
\sum_{i=1}^n\!\bigl[S_k(\Z+y_i e_i)-S_k(\Z)\bigr]
=\sum_{i\in L} y_i\;+\;\mathbf{1}_{\{c=|T|\}}\sum_{i\in T} y_i.
\tag{*}
\]

\emph{Joint update.}
For the full increment $y\ge 0$,
\[
\begin{aligned}
S_k(\Z+y)
&=\min_{|J|=k}\ \sum_{j\in J}(z_j+y_j)
\ \ge\
\min_{\substack{C\subseteq T\\ |C|=c}}
\left[\sum_{j\in L}(z_j+y_j)+\sum_{j\in C}(z_j+y_j)\right]\\
&= S_k(\Z)+\sum_{j\in L} y_j+\min_{\substack{C\subseteq T\\ |C|=c}}\sum_{j\in C} y_j.
\end{aligned}
\]
If $c=|T|$, the minimum over $C$ equals $\sum_{j\in T} y_j$; otherwise it is $\ge 0$.
Thus
\[
S_k(\Z+y)-S_k(\Z)\ \ge\ \sum_{j\in L} y_j\;+\;\mathbf{1}_{\{c=|T|\}}\sum_{j\in T} y_j,
\tag{**}
\]
which matches or exceeds the sum in (*). This proves the claim for $S_k$.
Multiplying by $v_k\ge 0$ and summing over $k$ yields the inequality for $F_{\mathrm{GGF}}$.
\end{proof}

\begin{lemma}[Maximin: joint never underestimates locals]
\label{lem-full:maximin}
Let $F_{\min}(\Z)=\min_i z_i$, let $m=\min_i z_i$, and let
$S=\{i:\,z_i=m\}$ be the set of minimizers. Let $\sigma$ be the second-smallest baseline
value (or $+\infty$ if none), and let
\[
\sigma' \;:=\; \min_{j\ne i^\star} (z_j + y_j)
\]
be the post-update second-smallest value when $|S|=1$ with $S=\{i^\star\}$.
If $y\ge 0$, then
\[
F_{\min}(\Z+y)-F_{\min}(\Z)
\;\;\ge\;\;
\sum_{i=1}^n\!\bigl[F_{\min}(\Z+y_i e_i)-F_{\min}(\Z)\bigr].
\]

\textbf{Equality conditions.}
\begin{itemize}
\item If $|S|\ge 2$ (multiple minima), equality holds iff $\min_{i\in S} y_i=0$.
      Otherwise the inequality is strict.
\item If $|S|=1$ with $S=\{i^\star\}$ (unique minimum), write
      \[
      \text{local sum} \;=\; \min\{y_{i^\star},\,\sigma-m\},
      \qquad
      \text{joint} \;=\; \min\{y_{i^\star},\,\sigma'-m\}.
      \]
      Equality holds iff either
      \begin{enumerate}
        \item $y_{i^\star} \le \sigma-m$ \ (then both sides equal $y_{i^\star}$), or
        \item $\sigma'=\sigma$ \ (then both sides equal $\min\{y_{i^\star},\,\sigma-m\}$).
      \end{enumerate}
      If $y_{i^\star}>\sigma-m$ and $\sigma'>\sigma$, the inequality is strict.
\end{itemize}
\end{lemma}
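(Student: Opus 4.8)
The plan is to prove the inequality by a case analysis on the cardinality of the minimizer set $S=\{i:z_i=m\}$, computing both the surrogate sum of local gains and the joint gain exactly in each case and then comparing them. The single observation driving every case is that, because $y\ge 0$, increasing any one coordinate can only raise the minimum, and it raises it strictly only when that coordinate is the \emph{unique} minimizer; this immediately controls which local terms vanish.

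First I would handle the case $|S|\ge 2$. Here, for any single-coordinate update $\Z+y_ie_i$ at least one index remains at the baseline value $m$, so the minimum is unchanged and every local gain is $0$; hence the surrogate sum is $0$. Meanwhile the joint update satisfies $F_{\min}(\Z+y)=\min_i(z_i+y_i)\ge m$ since $y\ge 0$, so the joint gain is nonnegative and the inequality holds. For the equality condition I would observe that the joint gain is $0$ exactly when some minimizer retains value $m$, i.e. $\min_{i\in S}y_i=0$; otherwise it is strictly positive.

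Next I would treat the unique-minimum case $|S|=1$ with $S=\{i^\star\}$. For the local terms, only the update at $i^\star$ can move the minimum: raising $z_{i^\star}$ by $y_{i^\star}$ yields new minimum $\min\{m+y_{i^\star},\sigma\}$, so that local gain is $\min\{y_{i^\star},\sigma-m\}$, while every other single-coordinate update leaves $i^\star$ as the unique minimizer and contributes $0$. Thus the surrogate sum equals $\min\{y_{i^\star},\sigma-m\}$. For the joint term I would write $F_{\min}(\Z+y)=\min\{m+y_{i^\star},\sigma'\}$ with $\sigma'=\min_{j\ne i^\star}(z_j+y_j)$, giving joint gain $\min\{y_{i^\star},\sigma'-m\}$.

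The comparison then reduces to one monotonicity fact: since $y\ge 0$, we have $\sigma'=\min_{j\ne i^\star}(z_j+y_j)\ge\min_{j\ne i^\star}z_j=\sigma$, hence $\min\{y_{i^\star},\sigma'-m\}\ge\min\{y_{i^\star},\sigma-m\}$, which is exactly the claimed inequality. The equality analysis is the most delicate part: because $\sigma'\ge\sigma$, the two minima coincide precisely when either the cap $y_{i^\star}$ is the binding term on both sides ($y_{i^\star}\le\sigma-m$) or the second-smallest value is unaffected by the increments ($\sigma'=\sigma$); otherwise, when $y_{i^\star}>\sigma-m$ and $\sigma'>\sigma$, the inequality is strict. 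I expect the main obstacle to be the bookkeeping of the second-smallest order statistic rather than any deep argument: one must carefully distinguish $\sigma$ (pre-update) from $\sigma'$ (post-update) and verify that the local surrogate is governed by $\sigma$ while the joint value is governed by $\sigma'$, since this mismatch is exactly what produces the slack in the strict case.
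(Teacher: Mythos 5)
Your proposal is correct and follows essentially the same route as the paper's proof: the same case split on $|S|$, the same exact evaluation of the local gains (all zero except possibly at a unique minimizer, where the gain is $\min\{y_{i^\star},\sigma-m\}$) and of the joint gain ($\min\{\min_{i\in S}y_i,\sigma-m\}$ or $\min\{y_{i^\star},\sigma'-m\}$), and the same monotonicity observation $\sigma'\ge\sigma$ to conclude, with matching equality conditions.
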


\begin{proof}
\emph{Local updates.}
If $i\notin S$, the minimum stays $m$, so the local change is $0$.
If $|S|\ge 2$ and $i\in S$, raising a single tied minimum still leaves some index at $m$,
so the local change is $0$. If $|S|=1$ with $S=\{i^\star\}$, then
\[
F_{\min}(\Z+y_{i^\star}e_{i^\star})-F_{\min}(\Z)
= \min\{\,m+y_{i^\star},\,\sigma\,\}-m
= \min\{\,y_{i^\star},\,\sigma-m\,\}.
\]
Summing over $i$ gives the stated “local sum.”

\emph{Joint update.}
If $|S|\ge 2$,
\[
F_{\min}(\Z+y)-F_{\min}(\Z)
= \min\{\,\min_{i\in S} y_i,\,\sigma-m\,\}\ \ge\ 0,
\]
so the inequality holds; equality iff $\min_{i\in S} y_i=0$.
If $|S|=1$ with $S=\{i^\star\}$,
\[
F_{\min}(\Z+y)-F_{\min}(\Z)
= \min\{\,y_{i^\star},\,\sigma'-m\,\}.
\]
Since $\sigma'\ge \sigma$ (because $y\ge 0$ on non-min entries), we have
\(
\min\{y_{i^\star},\sigma'-m\}\ge \min\{y_{i^\star},\sigma-m\}
\)
with equality exactly under (1) or (2) above.
\end{proof}

\bigskip
Now we are ready to prove the main result.

\begin{proof}[Proof of Theorem~\ref{thm-full:local-gain-lb}]
Each case follows from the corresponding lemma, which establishes
$\Delta_{\mathrm{joint}}\ge \sum_i \Delta^{\mathrm{local}}_i$ under the stated assumptions,
and provides the equality conditions.
\end{proof}

\noindent
This theorem implies that GIFF’s surrogate never \emph{over-promises}: the realized fairness gain is guaranteed to be at least as large as the surrogate, and is exact for $\alpha$-fairness. Equality conditions for variance and maximin are given in Lemmas~\ref{lem-full:variance} and \ref{lem-full:maximin}. See Subsection~\ref{subsec-app:theory-practical} for a detailed discussion of implications.

\subsection{Monotone Surrogate Fairness under $\beta$}
\label{subsec-app:theory-monotone}

Next, we show that as the fairness weight $\beta$ increases, the allocation’s sum of local fairness gains is nondecreasing. This gives a monotonic control knob: tuning $\beta$ cannot reduce surrogate fairness, and under uniqueness, any allocation switch yields a strict increase.



\newcommand{\Sur}{S} 

\begin{theorem}[Monotone increase of the sum of local fairness gains]
\label{thm-full:monotone-surrogate}
Fix a decision round with baseline payoff vector $\Z\in\mathbb{R}^n$.
Let $\mathcal A$ be the (finite) set of feasible joint allocations $A=(a_1,\ldots,a_n)$.
For each $A\in\mathcal A$, define
\begin{align}
U(A) \;&:=\; \sum_{i=1}^n Q(o_i,a_i)\\
\Sur(A) \;&:=\; \sum_{i=1}^n \Delta F_i(a_i)\\
\Delta F_i(a_i) \;&:=\; F(\Z + y_i(a_i)\,e_i) - F(\Z)
\end{align}
where $y_i(a_i)\ge 0$ is the accounted increment added to $z_i$ if agent $i$ takes $a_i$,
and $e_i$ is the $i$-th unit vector. (All $\Delta F_i(\cdot)$ use the same baseline $\Z$.)

For $\beta\in[0,1)$, write $\theta=\beta/(1-\beta)$ and consider
\[
G_\theta(A)\;=\;U(A)\;+\;\theta\,\Sur(A).
\]
Let $A^*(\theta)\in\arg\max_{A\in\mathcal A} G_\theta(A)$ be any maximizer. Then for any
$0\le \theta_1<\theta_2<\infty$ and any choices
$A_1\in\arg\max G_{\theta_1}$, $A_2\in\arg\max G_{\theta_2}$,
\[
\Sur(A_2)\;\ge\;\Sur(A_1).
\]
Equivalently, as $\beta$ increases, the chosen allocation’s sum of local fairness gains is nondecreasing.
\end{theorem}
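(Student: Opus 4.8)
The plan is to run the standard revealed-preference (monotone comparative statics) argument indicated in the sketch, which uses nothing about the particular fairness function $F$ and only the fact that the objective is affine in the scalar parameter $\theta$ with $S$ as its coefficient. First I would record the reparametrization bookkeeping that justifies the ``equivalently, as $\beta$ increases'' clause: the map $\beta \mapsto \theta = \beta/(1-\beta)$ is strictly increasing on $[0,1)$, so $\beta_1 < \beta_2$ is equivalent to $\theta_1 < \theta_2$; moreover the GIFF objective $(1-\beta)U(A) + \beta S(A)$ equals $(1-\beta)\,G_\theta(A)$ with $(1-\beta) > 0$, so rescaling by this positive constant does not change the argmax set, and the allocation chosen by GIFF at a given $\beta$ is a maximizer of $G_\theta$. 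This reduces everything to the statement for $\theta_1 < \theta_2$ and arbitrary maximizers $A_1 \in \arg\max G_{\theta_1}$, $A_2 \in \arg\max G_{\theta_2}$.

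Next I would write down the two optimality inequalities that the definitions of $A_1$ and $A_2$ guarantee (both valid since $\mathcal A$ is finite, so maximizers exist, and $A_2, A_1$ are feasible comparators). Optimality of $A_1$ at $\theta_1$ gives $U(A_1) + \theta_1 S(A_1) \ge U(A_2) + \theta_1 S(A_2)$, while optimality of $A_2$ at $\theta_2$ gives $U(A_2) + \theta_2 S(A_2) \ge U(A_1) + \theta_2 S(A_1)$.

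The key algebraic step is to add these two inequalities. The utility terms $U(A_1)$ and $U(A_2)$ appear on both sides and cancel, leaving $\theta_1 S(A_1) + \theta_2 S(A_2) \ge \theta_1 S(A_2) + \theta_2 S(A_1)$. Collecting the surrogate terms rearranges this into the single product $(\theta_2 - \theta_1)\bigl(S(A_2) - S(A_1)\bigr) \ge 0$. Since $\theta_2 > \theta_1$, the scalar factor is strictly positive, so dividing through forces $S(A_2) - S(A_1) \ge 0$, which is exactly the claim $S(A_2) \ge S(A_1)$.

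Honestly, there is no deep obstacle here: the argument is purely order-theoretic. The only point demanding a little care is the bookkeeping in the first paragraph—verifying that the strictly increasing change of variables $\beta \mapsto \theta$ and the positive rescaling by $1-\beta$ genuinely preserve the maximizer set—together with the observation that the pairwise-inequality argument never assumes uniqueness, so the conclusion holds automatically for every choice of (possibly non-unique) maximizers. Strengthening this to the strict inequality of the corollary in the switch case would additionally require noting that under a unique maximizer, $A_1 \ne A_2$ rules out the equality branch, but that lies beyond the present statement.
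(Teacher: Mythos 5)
Your proposal is correct and uses exactly the same revealed-preference argument as the paper: write the two optimality inequalities for $A_1$ at $\theta_1$ and $A_2$ at $\theta_2$, combine them so the $U$ terms cancel, and conclude $(\theta_2-\theta_1)\bigl(S(A_2)-S(A_1)\bigr)\ge 0$. The extra bookkeeping you include about the change of variables $\beta\mapsto\theta$ and the positive rescaling by $1-\beta$ is consistent with (and slightly more explicit than) the paper's treatment.
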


\begin{proof}
By optimality of $A_1$ at $\theta_1$ and $A_2$ at $\theta_2$,
\begin{align*}
U(A_1) + \theta_1 \Sur(A_1) \;\ge\; U(A_2) + \theta_1 \Sur(A_2), \tag{1}\\
U(A_2) + \theta_2 \Sur(A_2) \;\ge\; U(A_1) + \theta_2 \Sur(A_1). \tag{2}
\end{align*}
Subtract (1) from (2) to cancel the $U$ terms:
\[
(\theta_2-\theta_1)\,\big(\Sur(A_2)-\Sur(A_1)\big)\;\ge\;0.
\]
Since $\theta_2>\theta_1$, we conclude $\Sur(A_2)\ge \Sur(A_1)$.
\end{proof}

\begin{corollary}[Strict increase at a true switch under uniqueness]
\label{cor:strict-switch}
If the maximizer is unique at $\theta_1$ and at $\theta_2$, let
$A_1=A^*(\theta_1)$ and $A_2=A^*(\theta_2)$. If $A_1\neq A_2$, then
\[
\Sur(A_2) \;>\; \Sur(A_1).
\]
\end{corollary}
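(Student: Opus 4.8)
The plan is to strengthen the two optimality inequalities used in the proof of Theorem~\ref{thm-full:monotone-surrogate} to \emph{strict} inequalities, leveraging the uniqueness hypothesis, and then recombine them in exactly the same way. In effect, I would replay the monotonicity argument with every $\ge$ upgraded to $>$; the only new ingredient is uniqueness, which is precisely what converts the weak optimality inequalities into strict ones.

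First I would invoke uniqueness at each weight. Since $A_1=A^*(\theta_1)$ is the unique maximizer of $G_{\theta_1}$ over the finite set $\mathcal A$, every competing allocation $A\neq A_1$ satisfies $G_{\theta_1}(A_1) > G_{\theta_1}(A)$ strictly. Applying this to $A=A_2$, which differs from $A_1$ by hypothesis, gives $U(A_1)+\theta_1 S(A_1) > U(A_2)+\theta_1 S(A_2)$. By the symmetric argument at $\theta_2$, with $A_2=A^*(\theta_2)$ the unique maximizer and $A_1\neq A_2$, I obtain $U(A_2)+\theta_2 S(A_2) > U(A_1)+\theta_2 S(A_1)$.

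Next I would add these two strict inequalities. The utility terms $U(A_1)$ and $U(A_2)$ appear on both sides and cancel, leaving $\theta_1 S(A_1)+\theta_2 S(A_2) > \theta_1 S(A_2)+\theta_2 S(A_1)$, i.e.\ $(\theta_2-\theta_1)\bigl(S(A_2)-S(A_1)\bigr) > 0$. Since $\theta_2>\theta_1$, the factor $\theta_2-\theta_1$ is strictly positive, so dividing yields $S(A_2) > S(A_1)$, as claimed. (Alternatively, one may combine this with the weak bound $S(A_2)\ge S(A_1)$ already established in Theorem~\ref{thm-full:monotone-surrogate} and rule out equality by contradiction, but the strict-addition route is the most direct.)

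The single point requiring care—and the main, albeit minor, obstacle—is justifying that uniqueness delivers strictness against the \emph{specific} competitor in each inequality: one must observe that $A_2\neq A_1$ places $A_2$ among the strictly dominated allocations at $\theta_1$, and symmetrically that $A_1\neq A_2$ places $A_1$ strictly below the optimum at $\theta_2$. Once this is noted, no further case analysis or computation is needed, and the conclusion follows immediately.
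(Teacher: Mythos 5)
Your proof is correct and follows essentially the same route as the paper's: both arguments rest on the two optimality inequalities from Theorem~\ref{thm-full:monotone-surrogate} combined with the uniqueness hypothesis, the only difference being that you upgrade the inequalities to strict ones and add them directly, whereas the paper assumes $\Sur(A_2)=\Sur(A_1)$ and derives a contradiction with uniqueness. The two are trivially equivalent reformulations, and your parenthetical remark already notes the paper's contradiction variant.
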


\begin{proof}
From Theorem~\ref{thm-full:monotone-surrogate}, $\Sur(A_2)\ge \Sur(A_1)$.
If equality held, then (1) and (2) above would force $U(A_1)=U(A_2)$ as well, so both
$A_1$ and $A_2$ would maximize $G_{\theta_1}$ and $G_{\theta_2}$—contradicting uniqueness.
\end{proof}

\paragraph{Scope, tie-breaking, and the $\beta\to 1$ endpoint.}
Theorem~\ref{thm-full:monotone-surrogate} requires only that (i) the feasible set $\mathcal A$ for the round is \emph{finite}, and (ii) all $\Delta F_i(\cdot)$ are computed against the \emph{same} baseline $\Z$ for that round. Changing the feasible set (e.g., constraints) or the baseline mid-sweep can break monotonicity; otherwise the result is agnostic to the choice of $F$ and to how $Q$ is obtained, provided $\Sur(A)$ is well defined. Deterministic tie-breaking is \emph{not} required for nondecreasing $\Sur(A^*(\theta))$, but if you want strict improvement at switches without assuming uniqueness, adopt a consistent rule such as: among $G_\theta$-maximizers, first maximize $\Sur(A)$, then $U(A)$. With this rule, any change in the selected allocation across $\theta_1<\theta_2$ implies $\Sur(A_2)>\Sur(A_1)$. Finally, interpreting $\beta\to 1^-$ as $\theta=\beta/(1-\beta)\to\infty$, the maximizers converge to $\arg\max_{A\in\mathcal A} \Sur(A)$, so the monotonicity statement extends continuously to the endpoint $\beta=1$.



\noindent
This monotonicity applies to the surrogate $S(A)$, not necessarily the realized fairness $F(\Z)$. To connect the two, we require slack bounds.

\subsection{Slack Bounds Between Surrogate and Realized Fairness}
\label{subsec-app:theory-slack}

Define the \emph{slack} of an allocation as
\[
\mathrm{slack} \;:=\; \Delta_{\mathrm{joint}} - S \;\;\ge 0.
\]
By Theorem~\ref{thm-full:local-gain-lb}, $\mathrm{slack}\ge 0$ for the given fairness metrics.  
Here we provide per-metric exact formulas or bounds, so that we can certify not just lower bounds but also two-sided guarantees.


\begin{lemma}[Slack for $\alpha$-fairness]
\label{lem-full:slack-alf}
Let $F_\alpha(\Z)=\sum_i U_\alpha(z_i)$ with
$U_\alpha(t)=\frac{t^{1-\alpha}}{1-\alpha}$ for $\alpha\neq1$ (domain $t\ge 0$) and
$U_1(t)=\log t$ (domain $t>0$). Assume all arguments lie in the domain.
Then
\[
\Delta_{\mathrm{joint}} \;=\; S \quad\text{and}\quad \mathrm{slack}=0.
\]
\emph{Interpretation.} The surrogate is exact for any nonnegative increment profile.
\end{lemma}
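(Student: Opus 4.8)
The plan is to recognize that this lemma is essentially an immediate consequence of the separability of $\alpha$-fairness, already exploited in Lemma~\ref{lem-full:alpha}. The entire argument hinges on the single structural fact that $F_\alpha(\Z)=\sum_{i=1}^n U_\alpha(z_i)$ is a sum of per-coordinate terms, so that changing one coordinate affects exactly one summand. Because the increment vector $y$ is added coordinate-wise and the local updates $\Z+y_i e_i$ touch only the $i$-th coordinate, both the realized joint gain and the surrogate decompose into the \emph{same} collection of one-dimensional differences $U_\alpha(z_i+y_i)-U_\alpha(z_i)$.

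Concretely, I would first invoke the domain hypothesis of the lemma to ensure every argument $z_i$ and $z_i+y_i$ lies in the domain of $U_\alpha$ (in particular $z_i,z_i+y_i>0$ when $\alpha=1$), so all the quantities below are well defined. Then I would expand the joint gain using separability,
\[
\Delta_{\mathrm{joint}}=F_\alpha(\Z+y)-F_\alpha(\Z)=\sum_{i=1}^n\bigl[U_\alpha(z_i+y_i)-U_\alpha(z_i)\bigr].
\]
Next I would compute a single local gain: since $\Z+y_ie_i$ differs from $\Z$ only in coordinate $i$, all other terms in the sum cancel, giving $\Delta^{\mathrm{local}}_i=U_\alpha(z_i+y_i)-U_\alpha(z_i)$. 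Summing over $i$ yields $S=\sum_{i=1}^n\bigl[U_\alpha(z_i+y_i)-U_\alpha(z_i)\bigr]$, which is term-for-term identical to the expansion of $\Delta_{\mathrm{joint}}$. Subtracting then gives $\mathrm{slack}=\Delta_{\mathrm{joint}}-S=0$.

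There is no real obstacle here: this is the exact-equality case of Theorem~\ref{thm-full:local-gain-lb}, and the slack statement is just its numerical restatement. The only point requiring a line of care is the domain bookkeeping for the logarithmic branch $\alpha=1$, which the lemma's hypotheses already handle. I would therefore keep the proof to a few lines, either citing Lemma~\ref{lem-full:alpha} directly for the equality $\Delta_{\mathrm{joint}}=S$ or reproducing the short separability computation for self-containedness, and then conclude $\mathrm{slack}=0$ immediately. The takeaway worth emphasizing is that exactness holds for \emph{every} nonnegative increment profile $y$, with no further structural assumption, precisely because separable objectives carry no cross-agent synergy.
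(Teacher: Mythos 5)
Your proposal is correct and matches the paper's own proof of Lemma~\ref{lem-full:slack-alf}, which likewise expands both $\Delta_{\mathrm{joint}}$ and each $\Delta^{\mathrm{local}}_i$ via the separability of $F_\alpha$ into the same per-coordinate differences $U_\alpha(z_i+y_i)-U_\alpha(z_i)$ and concludes $\Delta_{\mathrm{joint}}=S$, hence $\mathrm{slack}=0$. Your observation that one could instead simply cite Lemma~\ref{lem-full:alpha} is also consistent with how the paper treats this as the exact-equality case of Theorem~\ref{thm-full:local-gain-lb}.
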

\begin{proof}
By separability,
$F_\alpha(\Z+y)-F_\alpha(\Z)=\sum_i\!\bigl[U_\alpha(z_i+y_i)-U_\alpha(z_i)\bigr]$,
while for a single-coordinate update
$F_\alpha(\Z+y_i e_i)-F_\alpha(\Z)=U_\alpha(z_i+y_i)-U_\alpha(z_i)$.
Summing over $i$ gives $\Delta_{\mathrm{joint}}=S$.
\end{proof}

\begin{lemma}[Slack for negative variance]
\label{lem-full:slack-var}
Let $F_{\mathrm{var}}(\Z)=-\mathrm{Var}(\Z)$, and set $Y:=\sum_i y_i$.
Then
\[
\mathrm{slack}
\;=\; \frac{Y^2 - \sum_i y_i^2}{n^2}
\;=\; \frac{2}{n^2}\sum_{i<j} y_i y_j
\;\;\in\;\;\Big[\,0,\;\frac{Y^2}{n^2}\Big(1-\tfrac{1}{m}\Big)\,\Big],
\]
where $m:=|\{i:\,y_i>0\}|$.
\emph{Interpretation.} The gap is a pure “synergy” term that grows when gains are
spread across more agents; it vanishes when at most one coordinate increases.
\end{lemma}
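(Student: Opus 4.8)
The plan is to build directly on the algebraic identity already derived in Lemma~\ref{lem-full:variance}. That lemma's computation established that $\Delta_{\mathrm{joint}} - S = \frac{Y^2 - \sum_i y_i^2}{n^2}$; combined with the standard expansion $Y^2 = (\sum_i y_i)^2 = \sum_i y_i^2 + 2\sum_{i<j} y_i y_j$, this immediately yields the two equalities in the statement, $\mathrm{slack} = \frac{Y^2 - \sum_i y_i^2}{n^2} = \frac{2}{n^2}\sum_{i<j} y_i y_j$. So the first half of the claim costs nothing beyond recalling the earlier derivation.

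For the lower endpoint of the interval, I would simply invoke Assumption~\ref{asmp-full:nonneg-y}: since $y_i \ge 0$ for all $i$, every product $y_i y_j$ is nonnegative, whence $\mathrm{slack} \ge 0$. This also recovers the equality-at-zero case (at most one positive increment) noted in Lemma~\ref{lem-full:variance}.

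The only genuinely new work is the upper bound. First I would observe that coordinates with $y_i = 0$ contribute nothing to either $Y$ or $\sum_i y_i^2$, so I can restrict attention to the $m$ strictly positive increments. Starting from $\mathrm{slack} = \frac{Y^2 - \sum_i y_i^2}{n^2}$, the claimed bound $\mathrm{slack} \le \frac{Y^2}{n^2}\bigl(1 - \tfrac{1}{m}\bigr)$ is equivalent, after clearing $n^2$ and rearranging, to $\sum_i y_i^2 \ge \frac{Y^2}{m}$. This is precisely the Cauchy--Schwarz inequality (equivalently, QM--AM) applied over the support of $y$: writing $Y = \sum_{i:\,y_i>0} y_i$, we have $Y^2 = \bigl(\sum_{i:\,y_i>0} 1\cdot y_i\bigr)^2 \le m \sum_{i:\,y_i>0} y_i^2 = m\sum_i y_i^2$. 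Substituting back gives the desired upper bound.

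I expect no real obstacle here, as the heavy lifting was done in Lemma~\ref{lem-full:variance}; the single point requiring care is that $m$ must count \emph{only} the positive increments, since applying Cauchy--Schwarz over all $n$ coordinates would give the wrong (weaker) constant. For completeness I would also record the equality case: the upper bound is attained exactly when all $m$ positive increments are equal, which is the equality condition of Cauchy--Schwarz and corresponds to gains spread uniformly across the participating agents---consistent with the ``synergy'' interpretation stated in the lemma.
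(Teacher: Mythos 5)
Your proposal is correct and follows essentially the same route as the paper's proof: both reuse the algebraic identity from Lemma~\ref{lem-full:variance} for the two equalities and the nonnegativity, and both obtain the upper bound from Cauchy--Schwarz restricted to the $m$ positive increments, i.e.\ $\sum_i y_i^2 \ge Y^2/m$. Your added remark on the equality case (all positive increments equal) is a small, correct refinement the paper omits.
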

\begin{proof}
Using $F_{\mathrm{var}}(\Z)=\mu^2-\frac{1}{n}\sum_i z_i^2$ with $\mu=\frac{1}{n}\sum_i z_i$,
the calculation in Lemma~\ref{lem-full:variance} gives
\[
\Delta_{\mathrm{joint}}-\sum_i \Delta^{\mathrm{local}}_i
=\frac{Y^2-\sum_i y_i^2}{n^2}
=\frac{2}{n^2}\sum_{i<j}y_i y_j \;\ge 0.
\]
For the upper bound, by Cauchy–Schwarz, with $m$ positive entries,
$\sum_i y_i^2\ge Y^2/m$, hence $Y^2-\sum_i y_i^2\le Y^2(1-1/m)$.
\end{proof}



\begin{lemma}[Slack for GGF (nonincreasing weights)]
\label{lem-full:slack-ggf}
Let $F_{\mathrm{GGF}}(\Z)=\sum_{k=1}^n w_k\, z_{(k)}$ with nonincreasing weights $w_1\ge\cdots\ge w_n$ and $w_{n+1}:=0$, and let the increments $y$ be nonnegative. Set $Y:=\sum_i y_i$, $m:=|\{i:\,y_i>0\}|$, and $y_{\max}:=\max_i y_i$. Define
\[
q:=\min\!\Big(m,\,\big\lfloor Y/y_{\max}\big\rfloor\Big),\qquad
r:=Y-q\,y_{\max}\in[0,y_{\max}).
\]
Then
\[
\Delta_{\mathrm{joint}}
\;\le\;
y_{\max}\sum_{k=1}^{q} w_k \;+\; r\,w_{q+1}
\]
and hence,
\[
\mathrm{slack}
\;\le\;
\Big(y_{\max}\sum_{k=1}^{q} w_k + r\,w_{q+1}\Big) - S.
\]
If the baseline order of $\Z$ is strict and preserved after the update, writing
$y_{(1)}\le\cdots\le y_{(n)}$ aligned with that order,
\[
\Delta_{\mathrm{joint}} \;=\; \sum_{k=1}^n w_k\, y_{(k)} \;=\; S,
\quad\text{so }\mathrm{slack}=0.
\]
\emph{Interpretation.} The $y_{\max}$-cap is a simple data-dependent envelope (depending only on $m$, $Y$, $y_{\max}$ and the weights); it is typically tight when mass is spread.
\end{lemma}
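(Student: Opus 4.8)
The plan is to obtain the upper bound on $\Delta_{\mathrm{joint}}=F_{\mathrm{GGF}}(\Z+y)-F_{\mathrm{GGF}}(\Z)$ from the variational (minimum-over-permutations) form of the GGF, and then to reduce the resulting arrangement of increments to a capped greedy ``water-filling'' envelope. First I would record that, because the weights are nonincreasing, the rearrangement inequality gives $F_{\mathrm{GGF}}(\mathbf u)=\sum_k w_k u_{(k)}=\min_{\rho}\sum_k w_k u_{\rho(k)}$ for every vector $\mathbf u$ (pairing the largest weights with the smallest entries minimizes the sum). Applying this to $\mathbf u=\Z+y$ with the single \emph{suboptimal} permutation $\pi$ that sorts the baseline $\Z$ ascending, and using that ``$\min \le$ any particular value,'' yields
\[
F_{\mathrm{GGF}}(\Z+y)\le \sum_k w_k\bigl(z_{\pi(k)}+y_{\pi(k)}\bigr)=F_{\mathrm{GGF}}(\Z)+\sum_k w_k\,y_{\pi(k)},
\]
since $\sum_k w_k z_{\pi(k)}=\sum_k w_k z_{(k)}=F_{\mathrm{GGF}}(\Z)$. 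Hence $\Delta_{\mathrm{joint}}\le \sum_k w_k y_{\pi(k)}$.

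Next I would bound the arrangement $\sum_k w_k y_{\pi(k)}$ independently of the particular ordering. The entries $(y_{\pi(k)})_k$ form a nonnegative vector with each coordinate at most $y_{\max}$ and total $Y$, so it is feasible for the program $\max\{\sum_k w_k a_k : 0\le a_k\le y_{\max},\ \sum_k a_k=Y\}$. Because the weights are nonincreasing, this program is solved greedily by front-loading mass into the highest-weight slots subject to the per-slot cap: fill $q=\lfloor Y/y_{\max}\rfloor$ slots at $y_{\max}$ and place the remainder $r=Y-q\,y_{\max}\in[0,y_{\max})$ into slot $q{+}1$, for optimal value $y_{\max}\sum_{k=1}^{q} w_k + r\,w_{q+1}$ (an exchange argument certifies optimality). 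I would note that $Y\le m\,y_{\max}$ forces $\lfloor Y/y_{\max}\rfloor\le m$, so the ``$\min$ with $m$'' in the definition of $q$ is active only in the degenerate case where all positive increments equal $y_{\max}$ (whence $r=0$), and that $q{+}1\le n$ whenever $r>0$, so $w_{q+1}$ is well defined (with $w_{n+1}:=0$ covering the rest). Chaining the two displays proves the claimed inequality on $\Delta_{\mathrm{joint}}$, and subtracting $S$ gives the slack bound verbatim.

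For the order-preserved case I would argue that when the strict baseline ordering is unchanged—i.e., the sorting permutation $\pi$ of $\Z+y$ equals that of $\Z$, and likewise for each single-coordinate update $\Z+y_i e_i$—the GGF reduces to a fixed-order coordinatewise sum: $(\Z+y)_{(k)}=z_{(k)}+y_{(k)}$ with $y_{(k)}:=y_{\pi(k)}$, giving $\Delta_{\mathrm{joint}}=\sum_k w_k y_{(k)}$ exactly. The surrogate matches term by term, because a single update $y_i e_i$ for the agent at rank $k$ leaves all other ranks fixed and only raises the rank-$k$ value, so $\Delta^{\mathrm{local}}_i=w_k y_i$; summing over agents (equivalently over ranks) yields $S=\sum_k w_k y_{(k)}=\Delta_{\mathrm{joint}}$, i.e., $\mathrm{slack}=0$. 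The stated hypothesis that $y_{(1)}\le\cdots\le y_{(n)}$ aligned with the ascending order of $\Z$ is precisely a convenient sufficient condition for this rank preservation.

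The main obstacle is the envelope step: the bound must be \emph{arrangement-free}, depending only on $(Y,m,y_{\max})$ and the weights, so the crux is recognizing that $\sum_k w_k y_{\pi(k)}$ is dominated by the capped greedy value rather than by a data-specific quantity such as $\sum_k w_k y_{[k]}$, and then certifying that greedy value via the exchange argument. The remaining delicacies are purely at the boundary—pinning down $q$, $r\in[0,y_{\max})$, and the well-definedness of $w_{q+1}$—all of which follow from $Y\le m\,y_{\max}$; and, in the equality case, making the rank-preservation hypothesis explicit for the local updates as well as the joint one.
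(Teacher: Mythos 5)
Your proof of the main inequality is correct and follows essentially the same route as the paper's: reduce $\Delta_{\mathrm{joint}}$ to an arrangement of the increments against the weights, then dominate that arrangement by the capped greedy (water-filling) envelope $y_{\max}\sum_{k\le q}w_k + r\,w_{q+1}$. Your first step is in fact slightly sharper and more carefully justified than the paper's: the paper invokes rearrangement to jump directly to $\Delta_{\mathrm{joint}}\le\sum_k w_k y_{[k]}$ with $y$ sorted descending, whereas you use the variational form $F_{\mathrm{GGF}}(\mathbf u)=\min_\rho\sum_k w_k u_{\rho(k)}$ evaluated at the baseline-sorting permutation, which gives the tighter intermediate bound $\Delta_{\mathrm{joint}}\le\sum_k w_k y_{\pi(k)}$ before passing to the same arrangement-free envelope. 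Your bookkeeping on $q$, $r$, and $w_{q+1}$ is also correct.

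The gap is in the equality case. You correctly identify that the argument needs rank preservation under each \emph{single-coordinate} update $\Z+y_ie_i$ (so that $\Delta^{\mathrm{local}}_i=w_k y_i$ for the rank-$k$ agent), but your closing claim that the lemma's stated hypothesis---strict baseline order preserved after the joint update, with $y$ co-monotone---is sufficient for this is false. Take $\Z=(1,2)$, $y=(2,3)$, $w=(0.7,0.3)$: the joint update preserves order and $y$ is aligned, yet $\Z+y_1e_1=(3,2)$ crosses ranks, giving $\Delta^{\mathrm{local}}_1=1.0\ne w_1y_1=1.4$, so $S=1.9$ while $\Delta_{\mathrm{joint}}=\sum_k w_ky_{(k)}=2.3$ and the slack is $0.4$, not $0$. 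The correct sufficient condition is that no single update crosses a rank, i.e.\ $y_{(k)}\le z_{(k+1)}-z_{(k)}$ for each $k$. To be fair, the paper's own proof asserts $\sum_k w_k y_{(k)}=S$ in the ``no-rank-crossing regime'' without spelling this out either; your write-up at least makes the needed hypothesis explicit before (incorrectly) claiming it follows from the stated one.
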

\begin{proof}
Let $y_{[1]}\ge\cdots\ge y_{[n]}$ be the increments sorted descending. By rearrangement,
\[
\Delta_{\mathrm{joint}}
\;\le\; \sum_{k=1}^n w_k\,y_{[k]}.
\]
Maximizing the right-hand side under $0\le y_{[k]}\le y_{\max}$, $\sum_k y_{[k]}=Y$, and at most $m$ positives fills the top $q$ slots with $y_{\max}$ and places the remainder $r$ in slot $q{+}1$, yielding $y_{\max}\sum_{k=1}^{q} w_k + r\,w_{q+1}$. Since $\mathrm{slack}=\Delta_{\mathrm{joint}}-S$, the stated slack bound follows. In the no-rank-crossing regime, each baseline rank-$k$ index remains at rank $k$, so the joint change equals $\sum_k w_k y_{(k)}=S$.
\end{proof}

\begin{lemma}[Slack for maximin]
\label{lem-full:slack-min}
Let $F_{\min}(\Z)=\min_i z_i$. Denote $m^\star:=\min_i z_i$ and
$S:=\{i:\,z_i=m^\star\}$. Let $\sigma:=\min_{j\notin S} z_j$ (or $+\infty$ if $S=\{1,\dots,n\}$),
and define $y_{\max}:=\max_i y_i$.
\begin{enumerate}
\item If $|S|\ge 2$,
\[
\mathrm{slack}
\;=\; \min\!\big\{\min_{i\in S} y_i,\; \sigma - m^\star\big\}
\;\in\;[0,\,y_{\max}],
\]
with $\mathrm{slack}=0$ iff $\min_{i\in S}y_i=0$.
\item If $|S|=1$ with $S=\{i^\star\}$, let
$\sigma':=\min_{j\neq i^\star}(z_j+y_j) \ge \sigma$. Then
\[
\begin{split}
\mathrm{slack} ={}& \min\{\,y_{i^\star},\,\sigma'-m^\star\,\} - \min\{\,y_{i^\star},\,\sigma-m^\star\,\} \\
              & \in \big[\,0,\;\min\{y_{i^\star},\,\sigma'-\sigma\}\,\big].
\end{split}
\]
In particular, $\mathrm{slack}=0$ if either $y_{i^\star}\le\sigma-m^\star$ or $\sigma'=\sigma$.
\end{enumerate}
\end{lemma}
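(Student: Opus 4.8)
The plan is to obtain both slack formulas by direct subtraction, since Lemma~\ref{lem-full:maximin} already supplies closed forms for the realized joint gain $\Delta_{\mathrm{joint}}$ and for the sum of local gains $\sum_i \Delta^{\mathrm{local}}_i$ in each regime. First I would recall from that lemma that when the minimizer set $S$ has $|S|\ge 2$, every single-coordinate update leaves some index pinned at $m^\star$, so each local gain vanishes and the local-gain sum is $0$, while the joint gain equals $\min\{\min_{i\in S} y_i,\ \sigma-m^\star\}$. Subtracting gives $\mathrm{slack}=\min\{\min_{i\in S} y_i,\ \sigma-m^\star\}$, which is Case~(1). For the range I would note that $\sigma-m^\star>0$ whenever $S\neq\{1,\dots,n\}$ (and $=+\infty$ otherwise), so this $\min$ is nonnegative and bounded above by $\min_{i\in S} y_i\le y_{\max}$; the equality $\mathrm{slack}=0$ then forces $\min_{i\in S} y_i=0$.

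For Case~(2), with $S=\{i^\star\}$ a unique minimum, Lemma~\ref{lem-full:maximin} gives the local-gain sum $\min\{y_{i^\star},\,\sigma-m^\star\}$ and the joint gain $\min\{y_{i^\star},\,\sigma'-m^\star\}$, so the slack is exactly $\min\{y_{i^\star},\,\sigma'-m^\star\}-\min\{y_{i^\star},\,\sigma-m^\star\}$. The structural fact I would establish up front is $\sigma'\ge\sigma$, which is immediate from $y\ge 0$ on the non-minimizing coordinates; since $\min\{y_{i^\star},\cdot\}$ is nondecreasing, this already yields $\mathrm{slack}\ge 0$.

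The one genuinely nontrivial step is the upper bound $\mathrm{slack}\le\min\{y_{i^\star},\,\sigma'-\sigma\}$. I would prove it as an elementary property of the map $t\mapsto\min\{A,B+t\}$ for $A,B\ge 0$ and $t\ge 0$: it is nondecreasing and $1$-Lipschitz in $t$, and capped by $A$. Concretely, writing $A=y_{i^\star}$, $B=\sigma-m^\star$, $B'=\sigma'-m^\star$, the difference $\min\{A,B'\}-\min\{A,B\}$ is at most $B'-B=\sigma'-\sigma$ (monotone Lipschitz bound) and at most $A$ (since $\min\{A,B\}\ge 0$), and combining these two bounds gives the claimed cap $\min\{A,\sigma'-\sigma\}$. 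The two sufficient equality conditions then fall out of the same split: if $y_{i^\star}\le\sigma-m^\star$ both mins equal $y_{i^\star}$, and if $\sigma'=\sigma$ the two mins coincide, so the slack is $0$ in either case.

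I expect no serious obstacle, since the whole statement is essentially a corollary of the already-proven maximin lemma; the only real care is the elementary min-difference inequality underlying the Case~(2) envelope and keeping the $\sigma$-versus-$\sigma'$ bookkeeping straight. One notational caution worth flagging is that $S$ here denotes the minimizer set (as in Lemma~\ref{lem-full:maximin}), which is distinct from the surrogate $\sum_i\Delta^{\mathrm{local}}_i$ appearing in the definition of slack.
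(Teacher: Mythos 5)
Your proposal is correct and follows essentially the same route as the paper's proof: compute the local-gain sum and the joint gain in each regime (exactly the closed forms already derived in the maximin lower-bound lemma) and subtract. The only difference is that you spell out the upper bound $\min\{y_{i^\star},\sigma'-\sigma\}$ via the monotone $1$-Lipschitz property of $t\mapsto\min\{A,B+t\}$, a step the paper leaves implicit in ``subtracting gives the claim and bounds.''
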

\begin{proof}
Let $\Delta_{\rm joint}=F_{\min}(\Z+y)-F_{\min}(\Z)$ and
$\Delta^{\rm local}_i=F_{\min}(\Z+y_i e_i)-F_{\min}(\Z)$.

(1) If $|S|\ge 2$, every single-coordinate update leaves some entry at $m^\star$, so
$\Delta^{\rm local}_i=0$ for all $i$ and $S=0$. The new minimum after the joint update is
$\min\{\,m^\star+\min_{i\in S}y_i,\ \sigma\,\}$, hence
$\Delta_{\rm joint}=\min\{\min_{i\in S}y_i,\ \sigma-m^\star\}$ and the stated bounds follow.

(2) If $|S|=1$ with $S=\{i^\star\}$, then
$\Delta^{\rm local}_{i^\star}=\min\{y_{i^\star},\sigma-m^\star\}$ and
$\Delta^{\rm local}_i=0$ for $i\neq i^\star$, so $S=\min\{y_{i^\star},\sigma-m^\star\}$.
After the joint update, the second-smallest value becomes
$\sigma'=\min_{j\neq i^\star}(z_j+y_j)\ge\sigma$, hence
$\Delta_{\rm joint}=\min\{y_{i^\star},\sigma'-m^\star\}$. Subtracting gives the claim and bounds.
\end{proof}


\noindent
These results allow us to strengthen the monotonicity guarantee: if $S$ increases by more than an upper bound on $\mathrm{slack}$, then realized fairness must also strictly increase. For $\alpha$-fairness, this is immediate because $\mathrm{slack}=0$.

\subsection{Corollary: Strict Realized Fairness at $\beta$-Driven Switches}
\label{subsec-app:theory-strict}

Combining Theorem~\ref{thm-full:monotone-surrogate} with the slack bounds above yields a useful corollary: when reallocations occur as $\beta$ increases, and the increase in surrogate fairness exceeds the slack bound of the previous allocation, realized fairness must strictly improve.  

This gives a precise, testable condition: 
\begin{itemize}
  \item For $\alpha$-fairness, every true switch increases realized fairness.  
  \item For variance, a computable quadratic bound applies (Lemma~\ref{lem-full:slack-var}).  
  \item For GGF, a $y_{\max}$-cap yields a data-dependent bound, with slack vanishing when no rank crossings occur (Lemma~\ref{lem-full:slack-ggf}).  
  \item For maximin, the gap depends on whether the minimum is unique (Lemma~\ref{lem-full:slack-min}).  
\end{itemize}

\medskip
In practice, this means $\beta$ can be tuned with confidence that surrogate monotonicity is preserved, and—in many metrics—realized fairness improves as well.

\subsection{Practical Implications of the Theoretical Results}
\label{subsec-app:theory-practical}

The preceding theorems are not only of theoretical interest but also provide practical
tools for deploying GIFF in real systems:

\paragraph{Certified lower bounds.}
The Local–Gain Lower Bound guarantees that GIFF’s surrogate never overstates realized
fairness improvement. This turns the surrogate into a safe proxy: if an allocation is
predicted to achieve at least $\varepsilon$ fairness gain, the realized gain is
provably $\ge \varepsilon$. For $\alpha$-fairness, the surrogate is exact.

\paragraph{Safe tuning of $\beta$.}
The monotonicity theorem ensures that increasing the fairness weight $\beta$ cannot
reduce surrogate fairness. Practitioners can therefore adjust $\beta$ to explore
efficiency–fairness trade-offs without fear of hidden regressions.

\paragraph{Two-sided certificates.}
Slack bounds provide upper bounds on how much realized fairness can exceed the surrogate.
Together with the lower bound, this yields a sandwich:
\[
S \;\le\; \Delta_{\mathrm{joint}} \;\le\; S+\mathrm{slack}_{\max}.
\]
This makes GIFF auditable: both the minimum guaranteed improvement and the potential
gap are known at each round.

\paragraph{Fairness floors as constraints.}
Because $S$ is conservative, one can impose hard constraints of the form
$S \ge \varepsilon$ in the allocation ILP, guaranteeing that realized fairness
improvement is at least $\varepsilon$ each round.

\paragraph{Telescoping guarantees.}
Summing the Local–Gain Lower Bound over time gives a cumulative guarantee:
\[
F(\Z_T)-F(\Z_0) \;\;\ge\;\; \sum_{t=0}^{T-1} S^{(t)}.
\]
This yields an auditable trajectory of fairness progress over a deployment horizon.

\paragraph{Monitoring and debugging.}
The bounds enable runtime checks. If the observed joint change $\Delta_{\mathrm{joint}}$
ever falls below the computed $S$, then assumptions (such as nonnegative increments) have been violated, or an implementation error is present.

\medskip
\noindent
In short, the theorems make GIFF \emph{deployable}: the system’s fairness behavior
becomes predictable, auditable, and tunable in ways that are both theoretically grounded
and operationally meaningful.

\section{Simple Incentives - Extended (SI-X) Baseline}
\begin{table*}[t]
\centering
\caption{The effect of the SI-X fairness incentive $F(a)$ for $\beta>0$.}
\label{tab:six_effects}
\begin{tabular}{|c|c|c|}
\hline
 & \textbf{Better-off group} & \textbf{Worse-off group} \\
 & ($\bar{z} - z_{G(h)} > 0$) & ($\bar{z} - z_{G(h)} < 0$) \\ \hline
\textbf{Bad action} & Discourage & Heavily discourage \\
($\Pr(h,a) - z_{G(h)} > 0$) & (Negative incentive) & (Positive incentive, but bad action) \\ \hline
\textbf{Good action} & Lightly encourage & Heavily encourage \\
($\Pr(h,a) - z_{G(h)} < 0$) & (Negative incentive) & (Positive incentive) \\ \hline
\end{tabular}
\end{table*}

To provide a strong point of comparison for our homelessness prevention experiments, we developed a competitive baseline by directly adapting the \textbf{Simple Incentives (SI)} framework~\citep{SI_kumar2023}. The original method was designed to mitigate fairness issues in ridesharing by moving allocations closer to statistical parity. We apply its core variance-minimization logic to the homelessness domain, leading to the variant we term SI-X.

The central idea of the original SI framework is to treat the variance of outcomes across all groups, $\text{var}(\mathbf{Z})$, as a proxy for unfairness. At each step, the goal is to make an allocation that takes a gradient step toward minimizing this variance. The original paper derives a modified score function, $s'$, by subtracting the gradient of the variance from the original utility score, $s(i,a)$:
\begin{align}
    s'(i,a) = s(i,a) - \lambda \, \frac{\partial}{\partial \mathcal{A}} \var(\textbf{Z}) = s(i,a) + \frac{2}{|\textbf{Z}|} \lambda \, \sum_{z_j \in \textbf{Z}} (\bar{z} - z_j) \frac{\partial z_j}{\partial \mathcal{A}}
\end{align}
where $\lambda$ is a hyperparameter, $z_j$ is the historical average outcome for group $j$, and $\bar{z}$ is the average across all groups. The second term is the fairness \emph{incentive}: for each group, its disparity from the mean, $(\bar{z} - z_j)$, is weighted by the impact of the allocation on its outcome, $\frac{\partial z_j}{\partial \mathcal{A}}$.

To apply this framework to the homelessness domain, we make two simplifying assumptions in line with the assumptions made by the original work:
\begin{enumerate}
    \item \textbf{Simplification}: For an action $a$ involving a single household $h$, the sum over all groups collapses to just the term for that household's group, $G(h)$, since no other group's metric is affected.
    \item \textbf{Approximation}: The complex partial derivative, $\frac{\partial z_{G(h)}}{\partial a}$, is approximated with a simple, intuitive heuristic: the difference between the action's re-entry probability and the group's historical average, $\Pr(h,a) - z_{G(h)}$.
\end{enumerate}

These steps yield the final form for the fairness incentive, $F(a)$, used in our experiments. The modified utility for an action, $s'(a)$, becomes:
\begin{align}
    s'(a) &= Q(a) + \beta F(a) \\
    \text{where } F(a) &= \underbrace{(\bar{z} - z_{G(h)})}_{\text{group advantage}} \times \underbrace{(\Pr(h,a) - z_{G(h)})}_{\text{action advantage}}
\end{align}
and $Q(a) = -\Pr(h,a)$. The two components of this incentive have a clear intuition:
\begin{itemize}
    \item \textbf{Group Advantage}: The term $(\bar{z} - z_{G(h)})$ is positive for better-off groups (where $z_{G(h)} < \bar{z}$) and negative for worse-off groups.
    \item \textbf{Action Advantage}: The term $(\Pr(h,a) - z_{G(h)})$ is negative for "good" actions (lower re-entry risk than the group average) and positive for "bad" actions.
\end{itemize}

The product of these two terms, summarized in Table \ref{tab:six_effects}, creates a targeted incentive structure that directly reflects the original framework's goal of reducing variance by penalizing actions that increase disparity and rewarding those that reduce it.

\section{GIFF Algorithm}

Here we provide general algorithms for GIFF. These can be specialized for specific use cases and improved with practical modifications, like vectorization or short-circuiting when $\beta$ or $\delta$ is zero. The main algorithm is provided in Algorithm~\ref{alg:giff_general_main}, with the advantage correction process outlined in Algorithm~\ref{alg:advantage_correction_general}.

\begin{algorithm}[H]
\caption{General Incentives-based Framework for Fairness (GIFF)}
\label{alg:giff_general_main}
\begin{algorithmic}[1]
\Require Set of agents $\alpha$, Current payoff vector $\mathbf{Z}$, Q-values $Q(i,a)$, Fairness function $F$, Hyperparameters $\beta, \delta$
\State Initialize modified utility matrix $Q_{GIFF}$
\For{each agent $i \in \alpha$}
    \For{each available action $a \in A_i$}
        \State // \textit{1. Compute Local Fairness Gain}
        \State $\Delta F(a) \gets \text{ComputeFairnessGain}(i, a, Q(i,a), \mathbf{Z}, F)$
        \State
        \State // \textit{2. Compute Advantage Correction (see Algorithm \ref{alg:advantage_correction_general})}
        \State $\Delta Q_{\text{adv}}(a) \gets \text{AdvantageCorrection}(i, a, \{Q\}, \mathbf{Z}, F)$
        \State
        \State // \textit{3. Combine into GIFF-modified Utility}
        \State $Q_{f}(a) \gets \Delta F(a) + \delta \cdot \Delta Q_{\text{adv}}(a)$
        \State $Q_{GIFF}(i,a) \gets (1-\beta) \cdot Q(i,a) + \beta \cdot Q_{f}(a)$
    \EndFor
\EndFor
\State // \textit{4. Solve for Final Allocation}
\State $\mathcal{A}^* \gets \text{SOLVE\_ALLOCATION}(Q_{GIFF})$
\State \Return The fair allocation $\mathcal{A}^*$
\end{algorithmic}
\end{algorithm}

\begin{algorithm}[H]
\caption{Counterfactual Advantage Correction}
\label{alg:advantage_correction_general}
\begin{algorithmic}[1]
\Require Agent $i$, Action $a$, Q-values $\{Q(j,a)\}$, Current payoff vector $\mathbf{Z}$, Fairness function $F$
\Function{AdvantageCorrection}{$i, a, \{Q\}, \mathbf{Z}, F$}
    \State // \textit{1. Compute agent i's local fairness gain}
    \State $\Delta F(a) \gets \text{ComputeFairnessGain}(i, a, Q(i,a), \mathbf{Z}, F)$
    \State
    \State // \textit{2. Compute average counterfactual fairness gain}
    \State Initialize list $\mathcal{F}_{\text{cf}}$
    \For{each agent $j \neq i$ that can take action $a$}
        \State $\Delta F^{(j)} \gets \text{ComputeFairnessGain}(j, a, Q(j,a), \mathbf{Z}, F)$
        \State Append $\Delta F^{(j)}$ to $\mathcal{F}_{\text{cf}}$
    \EndFor
    \State $\Delta F_{\text{avg}}(a) \gets \text{mean}(\mathcal{F}_{\text{cf}})$
    \State
    \State // \textit{3. Compute final correction term}
    \State $F_{\text{adv}}(a) \gets \Delta F(a) - \Delta F_{\text{avg}}(a)$
    \State $\Delta Q(a) \gets Q(i, a) - \min_{a' \in A_i} Q(i, a')$
    \State $\Delta Q_{\text{adv}}(a) \gets F_{\text{adv}}(a) \cdot \Delta Q(a)$
    \State \Return $\Delta Q_{\text{adv}}(a)$
\EndFunction
\end{algorithmic}
\end{algorithm}

\section{Implementation Details for the Ridesharing Experiment}

We used the code provided by the authors of SI~\cite{SI_kumar2023} directly, modifying the parts where the score is modified by SI, instead using GIFF's approach.
The ridesharing experiments were conducted in a simulation environment based on prior work~\citep{shah2020neural, SI_kumar2023}.

\begin{itemize}
    \item \textbf{Environment}: The simulation models a fleet of 1,000 vehicles in Manhattan using a real-world NYC dataset. The experiments focus on the morning rush hour from 8 am to 12 pm.
    
    \item \textbf{Q-Values}: We use the same pre-trained Q-value models from the baseline work to ensure a direct and fair comparison. A central allocator uses these Q-values to assign passengers to drivers.
    
    \item \textbf{Fairness Metrics}: We evaluate fairness for both passengers and drivers, using the same definitions as the baseline method.
    \begin{itemize}
        \item \textbf{Passenger Fairness}: Passengers are grouped by their origin-destination neighborhood pair. The payoff vector, $\mathbf{Z}_p$, consists of the service rate for each group. Fairness is measured as the negative variance, $-var(\mathbf{Z}_p)$.
        
        \item \textbf{Driver Fairness}: Each driver is their own group. The payoff vector, $\mathbf{Z}_d$, is the cumulative number of trips assigned to each driver. Fairness is measured as $-var(\mathbf{Z}_d)$.
    \end{itemize}
\end{itemize}

\section{Implementation Details for the Homelessness Experiment}
\label{app:homelessness_details}

This section provides additional details on the experimental setup for the homelessness prevention domain, complementing the description in the main text.

\subsection{Data and Preprocessing}
The experiment utilizes two primary datasets, based on the work of \citet{kube2019allocating}. The first contains counterfactual re-entry probabilities for 13,940 households. These probabilities were generated using Bayesian Additive Regression Trees (BART) and estimate the likelihood of a household re-entering the homelessness system if assigned to one of four interventions: \textbf{Emergency Shelter (ES)}, \textbf{Transitional Housing (TH)}, \textbf{Rapid Re-housing (RRH)}, and \textbf{Homelessness Prevention (Prev)}. The second dataset, contains demographic and background features for each household. These two datasets were merged using the household identifier.

To define the fairness groups for our analysis, we automatically selected relevant features from the household data. We filtered for categorical features with at least two and at most twenty unique values. To ensure statistical stability, we further required that each unique value (i.e., each subgroup) within a feature be associated with at least 50 households in the dataset. This process yielded \textbf{38 distinct features} (e.g., race, family size, disability status) that were used to define demographic groups in 38 independent experimental runs.

\subsection{Temporal Simulation Setup}
To simulate a realistic, dynamic allocation process, we structured the experiment into discrete time windows. Using the \texttt{EntryDate} for each household, we divided the dataset into sequential, non-overlapping windows of \textbf{30 days}. All households entering the system within a given 30-day period were considered for allocation simultaneously at the end of that window.

\subsection{Resource Allocation and Constraints}
The experiment was conducted in a \textbf{constrained allocation setting}, where the total number of available slots for each intervention was fixed over the entire simulation. These totals were derived from the historical assignments in the original dataset:
\begin{itemize}
    \item \textbf{Prevention (Prev)}: 6202 slots
    \item \textbf{Emergency Shelter (ES)}: 4441 slots
    \item \textbf{Transitional Housing (TH)}: 2451 slots
    \item \textbf{Rapid Re-housing (RRH)}: 846 slots
\end{itemize}

The total available slots were not divided evenly across time. Instead, for each 30-day window, a number of slots for each intervention was made available proportionally. The number of slots in a given window was scaled based on the number of households arriving in that window relative to the total number of households. This ensures that resource availability realistically matches demand over time. Within each time window, an ILP solver was used to find the optimal assignment of households to the available intervention slots, based on the (potentially fairness-modified) utility scores.

\subsection{Fairness Metric and Evaluation Protocol}
The fairness objective in this experiment was to minimize the disparity in average re-entry probabilities across demographic groups. We used the \textbf{Gini index} as our measure of inequality. Since our framework is formulated as a maximization problem, the specific fairness function provided to the methods was the \textbf{negative Gini index} (\texttt{fairness\_function = lambda x: -gini(x)}).

The full experiment consisted of a loop over the 38 selected demographic features. For each feature, we ran the entire temporal simulation for both the GIFF and SI-X methods. For each method, we performed a sweep across a range of the fairness hyperparameter ($\beta$) to trace the trade-off between overall efficiency (total re-entry probability) and fairness (Gini index). The results from all 38 runs were then aggregated to produce the final distributions shown in the main paper.

\section{Practical Notes}
\subsection{A Note on Computational Overhead:}
Assuming that evaluating the fairness function on a payoff vector takes constant time, computing the GIFF-modified Q-value for one agent involves two main steps for each available action. First, the fairness gain for an action is computed in constant time. Without advantage correction, this is the same computational complexity as it takes to enumerate all Q-values. Second, the advantage correction term requires evaluating the counterfactual fairness gain for up to $n$ candidate agents, resulting in $O(n)$ time per action. With an average of $m$ actions available per agent, the overall computational overhead for calculating the modified Q-values for one agent is $O(m \cdot n)$. The total time for all agents, then, is $O(m \cdot n^2)$, which is much smaller than evaluating fairness over the total combinatorial search space of $O(m^n)$ joint actions.

\subsection{Clarification of the Q-value Assumption in Theoretical Analysis}

Our theoretical results in Appendix \ref{sec-app:giff_theory_results}, particularly the local-gain lower bound (Theorem \ref{thm-full:local-gain-lb}) and the monotonicity guarantee (Theorem \ref{thm-full:monotone-surrogate}), are derived under the idealizing assumption of Q-value correctness (Assumption \ref{asmp-full:Q-corr}). Specifically, we assume that the Q-value for an agent's action, $Q(o_i, a)$, represents the exact, single-step utility increment, $y_i$, that the agent will realize from that allocation in the current timestep.

We acknowledge that this is a simplification. In practice, Q-values are learned estimates of the long-term, discounted sum of future rewards and are subject to approximation errors, especially when function approximators like deep neural networks are used. They are not equivalent to the true, immediate utility gain.
However, if the Q-values are accurate representations of long-term value attained by the agent, the fairness improvement will still be correct, even though it will not be realized within the single time step.

Additionally, this kind of assumption is standard for the theoretical analysis of mechanisms built on top of reinforcement learning. Its purpose is to isolate and analyze the properties of the GIFF framework itself, disentangled from the separate and complex issue of Q-value estimation error. By assuming perfect Q-values as inputs, we can prove that the GIFF mechanism for translating utility into fairness is principled—that its surrogate is a valid lower bound and that its tuning parameter $\beta$ behaves predictably. The strong performance of GIFF in our diverse empirical evaluations, which use Q-values learned in complex and stochastic environments (or which appear from black-box sources), suggests that these desirable theoretical properties are robust enough to hold in practice even when this assumption is relaxed.

\section{Broader Impact Statement}
Methods to approach fairness often come at a cost to some participants' utility.
Resource allocation using GIFF has the potential to affect real-world stakeholders. However, users should exercise caution, as reliance on inaccurate learned Q-values or poor fairness metrics may propagate existing biases in the data
, and misapplication in high-stakes environments could lead to unforeseen inequitable outcomes.

\end{document}